\documentclass{article}

\usepackage{arxiv}
\usepackage[T1]{fontenc}
\usepackage{amsmath,amssymb,bm}
\usepackage{amsthm}

\theoremstyle{plain}
\newtheorem{theorem}{Theorem}[section]
\newtheorem{proposition}[theorem]{Proposition}

\theoremstyle{definition}
\newtheorem{assumption}{Assumption}

\usepackage{booktabs}
\usepackage{graphicx}
\usepackage{microtype}
\usepackage{placeins}
\usepackage{xcolor}
\usepackage{hyperref}
\usepackage{url}
\usepackage{caption}
\usepackage{subcaption}
\usepackage{enumitem}
\usepackage{float}

\graphicspath{{pdfs/}}
\hypersetup{
  colorlinks=true,
  linkcolor=blue!45!black,
  citecolor=blue!45!black,
  urlcolor=blue!45!black,
  pdftitle={Weak Form Recovery of Heston Type Stochastic Dynamics},
  pdfauthor={Sai Sathvik Gullipalli, Eshwar R A, and Gajanan V. Honnavar}
}

\newcommand{\E}{\mathbb{E}}
\newcommand{\dt}{\Delta t}
\newcommand{\dx}{\Delta x}
\newcommand{\dv}{\Delta v}

\title{Weak Form Recovery of Heston Type Stochastic Dynamics}

\author{
Sai Sathvik Gullipalli\thanks{This research was conducted via the Quantum and
Nano Devices (QuaNaD) Lab, PES University.}\\
Computer Science and Engineering\\
PES University (EC Campus)\\
Bengaluru - 560100, India\\
\texttt{saisathvik127@gmail.com}
\And
Eshwar R A\thanks{Corresponding author.}\\
Computer Science and Engineering\\
PES University (EC Campus)\\
Bengaluru - 560100, India\\
\texttt{eshwarra5@gmail.com}
\And
Gajanan V.\ Honnavar\thanks{Former Professor, Department of Science and
Humanities, PES University (EC Campus).}\\
Independent Researcher, QNu Labs pvt Ltd\\
Centenary Building, M G Road\\
Bangalore - 560025, India\\
\texttt{gajanan.v@qnulabs.com}
}

\begin{document}
\raggedbottom
\maketitle

\begin{abstract}
Estimating the coupled drift, diffusion, and leverage structure of a stochastic-volatility model directly from a price path is an unresolved inverse problem: Kramers--Moyal increment estimators amplify sampling noise as the step shrinks, weak-form SINDy has not been extended to coupled two-dimensional diffusions or to the return--variance cross-variation producing leverage, and Heston calibration typically relies on option-implied surfaces rather than the physical-measure path. We extend the spatial weak-form Galerkin framework to the Heston model: variance increments, squared variance increments, squared price increments, and their cross-product are projected onto shared Gaussian kernels in variance space, giving one LASSO regression that jointly recovers mean reversion $\kappa$, long-run variance $\theta$, vol-of-vol $\xi$, and
leverage correlation $\rho$, with a drift-informed bias correction analogous to scalar-SDE diffusion debiasing. Across 30 daily-observed Heston simulations, $\xi$, $\rho$, and $\rho\xi$ are
recovered with median errors under 2\%. Applied to S\&P 500 data spanning the 2007--2010 crisis, the method recovers negative leverage consistent with the documented equity leverage effect, and a 50-stock Indian panel shows the same sign under several independent variance proxies.
\end{abstract}

\keywords{weak-form regression \and stochastic SINDy \and Heston model
\and leverage effect \and LASSO \and variance proxies \and errors in variables}

\section{Introduction}

A central challenge in quantitative finance is the inverse problem: given an
observed price path, can we recover the stochastic dynamics that generated
it? For a stochastic-volatility model this question has two parts. The
generator must be estimated from noisy increments, and the variance state
must first be constructed because it is not observed directly. A recovery can
therefore change when the variance proxy or its time indexing changes, even
when the regression method remains fixed.

Sparse Identification of Nonlinear Dynamics (SINDy) estimates governing
equations from a candidate function library \cite{Brunton2016}, with
model-selection and implicit-form extensions
\cite{ManganEtAl2017,KahemanEtAl2020} and a growing convergence theory that
characterizes when the underlying sparse regression recovers the true
support \cite{ZhangSchaeffer2019}. Robustness to experimental noise has also
been pursued through physically constrained symbolic regression
\cite{ReinboldEtAl2021}. Stochastic and weak-form variants replace pointwise
differentiation with moment or integral relations, which can be more stable
for noisy paths \cite{BoninsegnaEtAl2018,RuddyEtAl2019,MessengerBortz2021}.
The useful idea is simple: instead of estimating a derivative at every
observation, we average increments against smooth test functions. The
averaging improves numerical stability while retaining information about the
drift and diffusion. A related operator-theoretic line of work estimates the
infinitesimal generator itself through Koopman/EDMD approximations
\cite{KlusEtAl2020}; unlike that approach, the weak-form regression used here
returns explicit symbolic coefficients rather than a numerical operator
approximation, following the general stochastic-process framework of
\cite{Pavliotis2014}. We apply this construction to the Heston model
\cite{Heston1993} and add a weak cross-variation target for the leverage
effect.

The paper addresses four connected questions:
\begin{enumerate}[leftmargin=*,itemsep=2pt]
  \item Can one shared weak design recover variance drift, diagonal
  diffusion, and return--variance cross-diffusion?
  \item How does additive state noise affect the leverage coefficient
  $\rho\xi$, once uncertainty is measured over repeated noise draws?
  \item Can a one-sided variance filter reduce proxy noise without making
  generator recovery worse?
  \item Which conclusions remain stable when the method is applied to the
  S\&P 500 and to several variance proxies for Indian equities?
\end{enumerate}

These questions define the scope of the paper. The primary Heston library
$\{1,v\}$ is specified in advance, so the main result concerns coefficient
recovery inside that structure. The empirical estimates use same-day variance
proxies and are interpreted as contemporaneous associations. We do not claim
a real-time state estimator, causal identification, option-pricing
calibration, or trading performance.

\paragraph{Reproducibility.}
All recovery fits and manuscript figures are implemented in the single
notebook-style script \texttt{weak\_sindy\_project.py}; every numerical table
is written to CSV. The supplied Indian daily proxy panel is produced upstream
from the archived one-minute Parquet data. The accompanying reproducibility
package records the fixed seeds, software versions, input hashes, pipeline
settings, repeat-level results, and plotted summaries.

\section{Methodology}

This section connects the Heston parameters to quantities that can be
estimated from discrete data. We first state the model, then identify the
relevant increment moments, and finally show how the four weak targets are
fitted with one shared numerical pipeline.

\subsection{Heston dynamics}

Let $x_t=\log S_t$. Under the physical-measure interpretation used here, the
log price and its instantaneous variance satisfy
\begin{align}
 dx_t &= \left(\mu-\tfrac12v_t\right)dt+\sqrt{v_t}\,dW_{x,t},\\
 dv_t &= \kappa(\theta-v_t)dt+\xi\sqrt{v_t}\,dW_{v,t},\\
 \E[dW_{x,t}dW_{v,t}]&=\rho\,dt.
\end{align}
Here $\kappa$ is the speed of mean reversion, $\theta$ is the long-run
variance level, $\xi$ is the volatility of variance, and $\rho$ is the
instantaneous correlation between price and variance shocks. These
parameters enter the instantaneous covariance matrix
\begin{equation}
a(v)=
\begin{pmatrix}
v & \rho\xi v\\
\rho\xi v & \xi^2v
\end{pmatrix}.
\end{equation}
The usual Feller condition $2\kappa\theta\geq\xi^2$ is sufficient for strict
positivity of the CIR variance process, and is restated as
Assumption~\ref{ass:feller} below, where it also supplies the geometric
ergodicity that the consistency and normality results of
Section~\ref{sec:weakprojection} require. The off-diagonal entry is especially
important here: its slope is $\rho\xi$. A negative slope means that negative
price shocks tend to coincide with positive variance shocks, which is the
equity leverage effect in this diffusion model
\cite{Black1976,Christie1982}. The affine structure of $b(v)$ and $a(v)$
places the Heston model inside the broader affine jump-diffusion class
studied by \cite{DuffiePanSingleton2000}, whose transform methods and the
exact simulation scheme of \cite{BroadieKaya2006} we use only for generating
synthetic ground-truth trajectories; the recovery pipeline itself uses the
simpler Euler--Maruyama discretization of \cite{KloedenPlaten1992}, since
consistency of the weak estimator (Theorem~\ref{thm:consistency}) does not
require exact simulation.

\subsection{Generator and conditional-moment interpretation}

The generator is the object we want to recover because it contains both the
deterministic drift and the local covariance. For a diffusion with drift $b$
and covariance matrix $a$, it is
\begin{equation}
\mathcal{L}f(z)=b(z)^\top\nabla f(z)
+\tfrac12\operatorname{tr}\!\left(a(z)\nabla^2 f(z)\right).
\end{equation}
The reason this helps is that the first two local conditional moments identify
the same two components:
\begin{align}
b_i(z)&=\lim_{\tau\downarrow0}
\E\!\left[\frac{Z_i(t+\tau)-Z_i(t)}{\tau}\,\middle|\,Z_t=z\right],\\
a_{ij}(z)&=\lim_{\tau\downarrow0}
\E\!\left[\frac{\Delta Z_i\Delta Z_j}{\tau}\,\middle|\,Z_t=z\right].
\end{align}
Directly estimating these limits from one path is unstable. The spatial weak
form instead aggregates many increments against Gaussian state-space kernels.
The construction follows \cite{EshwarHonnavar2026}; our extension adds price
increments and the return--variance cross-variation needed for leverage, and
establishes that the coupled two-dimensional projection remains unbiased,
consistent, and asymptotically normal despite the correlated driving noise
$\rho$ shared by the two coordinates.

\subsection{Standing Assumptions}
\label{sec:assumptions}

The theoretical results below rest on two assumptions, adapted from the
scalar-diffusion framework of \cite{EshwarHonnavar2026} to the coupled
Heston pair $(x_t,v_t)$.

\begin{assumption}[Feller ergodicity of the variance marginal]
\label{ass:feller}
The Feller condition $2\kappa\theta\geq\xi^2$ holds, so that $v_t$ almost
surely stays strictly positive and the scalar CIR process
$dv_t=\kappa(\theta-v_t)\,dt+\xi\sqrt{v_t}\,dW_{v,t}$ possesses a unique
stationary law
$\pi(v)=\mathrm{Gamma}\!\left(2\kappa\theta/\xi^2,\ \xi^2/2\kappa\right)$
and is geometrically ergodic: there exist $C>0$, $\rho_0\in(0,1)$ such that
$\left|\E^v[g(v_t)]-\int g\,d\pi\right|\leq C\|g\|_\infty\rho_0^{\,t}$ for
every bounded measurable $g$ \cite{Feller1951,CoxIngersollRoss1985}.
\end{assumption}

\begin{assumption}[Regularity and library completeness]
\label{ass:regularity}
The drift and diffusion coefficients of $v_t$ lie in the span of the library
$\Theta(v)=[1,v]$, i.e. $b_v(v)=\kappa\theta-\kappa v$ and
$a^{vv}(v)=\xi^2v$ exactly. The kernels $K_j$ are bounded, Lipschitz, and
integrable against $\pi$. The log-price drift $b_x(v)=\mu-v/2$ and
covariance entries $a^{xx}(v)=v$, $a^{xv}(v)=\rho\xi v$ are affine in $v$, so
that all four weak targets defined below are exactly representable by the
same design matrix $A$.
\end{assumption}

A structural feature of the Heston system makes Assumption~\ref{ass:feller}
easier to invoke than in a generic bivariate diffusion: $v_t$ is
\emph{autonomous}. Its own dynamics do not depend on $x_t$, so the triangular
(cascade) structure of the system means that ergodicity of the full pair
$(x_t,v_t)$ reduces to ergodicity of the scalar CIR marginal, which is
classical \cite{Feller1951,CoxIngersollRoss1985,GenonCatalotJeantheauLaredo2000}.
This is what licenses localizing the kernels $K_j$ in $v$ alone rather than
in the joint $(x,v)$ state space.

\subsection{Weak Projection of the Coupled Flow}
\label{sec:weakprojection}

Let $\{(x_{t_n},v_{t_n})\}_{n=0}^{N}$ be observations at $t_n=n\dt$. Writing
$b_v(v)=\kappa(\theta-v)$, $b_x(v)=\mu-v/2$, and letting
$(\xi^x_n,\xi^v_n)$ be i.i.d.\ across $n$, standard bivariate normal pairs
with $\mathrm{Corr}(\xi^x_n,\xi^v_n)=\rho$ and $(\xi^x_n,\xi^v_n)\perp
\mathcal{F}_{t_n}$, the Euler--Maruyama discretization of the coupled system
is
\begin{align}
\dx_n&=b_x(v_{t_n})\dt+\sqrt{v_{t_n}}\,\xi^x_n\sqrt{\dt},\\
\dv_n&=b_v(v_{t_n})\dt+\xi\sqrt{v_{t_n}}\,\xi^v_n\sqrt{\dt}.
\end{align}
Multiplying by the spatial kernel $K_j(v_{t_n})$ and summing over the
trajectory decomposes each of the four weak targets into a deterministic
signal term (an ergodic average of a library function against $K_j$) and a
stochastic remainder driven by $(\xi^x_n,\xi^v_n)$. Because $K_j(v_{t_n})$
depends on the state at $t_n$ alone, it is $\mathcal{F}_{t_n}$-measurable,
and because the Euler--Maruyama shocks are independent of $\mathcal{F}_{t_n}$
this construction inherits the state-weighting argument of
\cite{EshwarHonnavar2026}: correlating the two coordinates of the driving
noise through $\rho$ does not reintroduce the temporal endogeneity bias that
a time-indexed test function would produce, because unbiasedness is proved
term-by-term via the tower property (Theorem~\ref{thm:unbiased} below), and
that argument never uses independence between $\xi^x_n$ and $\xi^v_n$
themselves --- only their independence from the filtration up to $t_n$.

\begin{theorem}[Unbiasedness of the coupled spatial projection]
\label{thm:unbiased}
Under Assumption~\ref{ass:regularity}, the stochastic component of each
weak target $B^{(b)}_j,\,Q^{(vv)}_j,\,Q^{(xx)}_j,\,Q^{(xv)}_j$ has zero
conditional expectation given $\mathcal{F}_{t_n}$ at every step $n$, and
hence unconditional expectation zero. Consequently
\begin{equation}
\E\big[B^{(b)}\big]=Ac^{*},\qquad
\E\big[Q^{(vv)}\big]=Ad^{*}_{vv}+\mathcal{O}(N\dt^2),\qquad
\E\big[Q^{(xx)}\big]=Ad^{*}_{xx}+\mathcal{O}(N\dt^2),\qquad
\E\big[Q^{(xv)}\big]=Ad^{*}_{xv}+\mathcal{O}(N\dt^2),
\end{equation}
where $c^{*}$ and $d^{*}_{(\cdot)}$ are the true library coefficients of
$b_v$, $a^{vv}$, $a^{xx}$, and $a^{xv}$, and the $\mathcal{O}(N\dt^2)$
remainders are the finite-step drift-squared biases quantified in
Theorem~\ref{thm:bias}.
\end{theorem}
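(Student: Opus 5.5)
The weak targets have not been written out explicitly, so I first fix the natural definitions consistent with the construction above:
\[
B^{(b)}_j=\sum_n K_j(v_{t_n})\,\dv_n,\qquad
Q^{(vv)}_j=\sum_n K_j(v_{t_n})\,\dv_n^2,
\]
\[
Q^{(xx)}_j=\sum_n K_j(v_{t_n})\,\dx_n^2,\qquad
Q^{(xv)}_j=\sum_n K_j(v_{t_n})\,\dx_n\dv_n.
\]
The design matrix has entries $A_{j\ell}=\sum_n K_j(v_{t_n})\Theta_\ell(v_{t_n})\dt$ with $\Theta=[1,v]$. The drift-target design may be scaled differently; I keep it generic. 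Since $A$ depends on the path, the statement $\E[B^{(b)}]=Ac^*$ should be read as holding either conditionally on the design or with $\E[A]$ on the right. I prove the stronger term-by-term identity, which gives both readings. The plan is to substitute the Euler--Maruyama increments into each summand and split it into a signal part and a remainder. The signal part is $\mathcal{F}_{t_n}$-measurable and, by Assumption~\ref{ass:regularity}, equals $K_j(v_{t_n})\Theta(v_{t_n})\dt\cdot(\text{true coefficients})$. The remainder is a function of $v_{t_n}$ multiplied by a centered polynomial in $(\xi^x_n,\xi^v_n)$.

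\textbf{Drift target.} We have $K_j\dv_n=K_j b_v(v_{t_n})\dt+K_j\xi\sqrt{v_{t_n}}\sqrt{\dt}\,\xi^v_n$. By Assumption~\ref{ass:regularity}, $b_v=\Theta c^*$ exactly. The remainder has conditional mean $K_j\xi\sqrt{v_{t_n}}\sqrt{\dt}\,\E[\xi^v_n]=0$, because $\xi^v_n\perp\mathcal{F}_{t_n}$. So $\E[B^{(b)}]=Ac^*$ with no $\dt^2$ term.

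\textbf{Quadratic targets.} Square the $v$-increment:
\[
\dv_n^2=\xi^2 v_{t_n}\dt\,(\xi^v_n)^2+2b_v\xi\sqrt{v_{t_n}}\,\dt^{3/2}\xi^v_n+b_v^2\dt^2 .
\]
Rewrite the first term as $\xi^2v\dt+\xi^2v\dt\big((\xi^v_n)^2-1\big)$. Both $(\xi^v_n)^2-1$ and $\xi^v_n$ have zero conditional mean, so the stochastic remainder vanishes in conditional expectation. The signal is $a^{vv}(v_{t_n})\dt=\Theta d^*_{vv}\dt$, and $K_j b_v^2\dt^2$ is left over.

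The $xx$ target is handled identically with $a^{xx}=v$ and leftover $K_j b_x^2\dt^2$.

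For the cross term,
\[
\dx_n\dv_n=\xi v_{t_n}\dt\,\xi^x_n\xi^v_n+\sqrt{\dt}^{\,3}\big(b_x\xi\sqrt v\,\xi^v_n+b_v\sqrt v\,\xi^x_n\big)+b_xb_v\dt^2 .
\]
I center with $\xi^x_n\xi^v_n-\rho$. Then $\E[\xi^x_n\xi^v_n-\rho\mid\mathcal{F}_{t_n}]=0$ uses only the joint law of the pair and its independence from $\mathcal{F}_{t_n}$, never independence between $\xi^x_n$ and $\xi^v_n$. The signal is $\rho\xi v\dt=\Theta d^*_{xv}\dt$.

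\textbf{Unconditional step and bias bound.} Summing over $n$, the tower property turns each zero conditional mean into zero unconditional mean. The leftover terms are $\sum_n K_j(v_{t_n})\,g(v_{t_n})\dt^2$ with $g\in\{b_v^2,b_x^2,b_xb_v\}$. Each is a quadratic polynomial in $v$, and $K_j$ is bounded. So the expected leftover is at most $N\dt^2\|K_j\|_\infty\sup_n\E[g(v_{t_n})]$.

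\textbf{Main obstacle: integrability.} The hard step is justifying that these expectations exist and are uniformly bounded in $n$; the algebra above is routine. Boundedness of $K_j$ alone does not control $\E|v\,(\xi^v)^2|$ or $\E[v^2]$. I would first invoke the CIR moment bounds: all moments of $v_t$ are finite and uniformly bounded under the stationary Gamma law of Assumption~\ref{ass:feller}, or from a deterministic start. That makes every remainder integrable, so conditional expectations are legitimate and the $\mathcal{O}(N\dt^2)$ constant is uniform.

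A subtlety is that the recovery pipeline uses the Euler chain, which can leave $(0,\infty)$, so $\sqrt{v}$ needs interpreting, say via truncation $\sqrt{v^+}$. Its moments must then be bounded separately. I would do this by a discrete Gr\"onwall argument on $\E[v_{t_n}^p]$, whose recursion is affine in lower moments. I would cite this as standard rather than reprove it, and defer the precise form of the drift-squared constants to Theorem~\ref{thm:bias}.
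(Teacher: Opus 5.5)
Your proof is correct and follows the paper's own route: write each summand as an $\mathcal{F}_{t_n}$-measurable library signal plus a remainder whose conditional mean vanishes because $(\xi^x_n,\xi^v_n)\perp\mathcal{F}_{t_n}$, then apply the tower property. You carry it out more carefully than the paper: centering $(\xi^v_n)^2-1$ and $\xi^x_n\xi^v_n-\rho$ makes precise the paper's loose claim that every stochastic term carrying $\xi^x_n$ or $\xi^v_n$ ``(or both)'' vanishes, and your integrability step fills a gap the paper skips (for the Gaussian kernels actually used, $K_j(v)(1+v^2)$ is bounded, so that step needs no CIR moment bounds).

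Two small corrections.
\begin{itemize}
\item The step-wise identities make $Ac^{*}$ the compensator of $B^{(b)}$, so they give $\E[B^{(b)}]=\E[A]c^{*}$. They do not give $\E[B^{(b)}\mid A]=Ac^{*}$, because $A$ contains $v_{t_{n+1}}$, which depends on the step-$n$ shock. Drop the ``conditionally on the design'' reading.
\item If you truncate to $\sqrt{v^+}$, the diffusion signals become multiples of $v^+_{t_n}\dt$. On $\{v_{t_n}<0\}$ these no longer equal the library values $\Theta(v_{t_n})d^{*}\dt$. Either keep the paper's implicit assumption that the chain stays nonnegative, or also bound the extra terms $\dt\sum_n K_j(v_{t_n})\,v^-_{t_n}$.
\end{itemize}
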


\begin{proof}
Fix $n$. Since $K_j(v_{t_n})$ is a deterministic function of $v_{t_n}$, it is
$\mathcal{F}_{t_n}$-measurable, and $(\xi^x_n,\xi^v_n)\perp\mathcal{F}_{t_n}$
by the Euler--Maruyama construction. For the drift target,
\begin{equation}
\E\!\left[K_j(v_{t_n})\,\xi\sqrt{v_{t_n}}\,\xi^v_n\,\middle|\,
\mathcal{F}_{t_n}\right]
=K_j(v_{t_n})\,\xi\sqrt{v_{t_n}}\,\E[\xi^v_n\mid\mathcal{F}_{t_n}]=0,
\end{equation}
using $\E[\xi^v_n\mid\mathcal{F}_{t_n}]=\E[\xi^v_n]=0$. The tower property
then gives unconditional zero mean, so $\E[B^{(b)}_j]=\sum_n
K_j(v_{t_n})b_v(v_{t_n})\dt$, which is the drift design equation. For each
quadratic target, expand the square (or cross-product) of the increments as
in Section~\ref{sec:bias}: every stochastic cross-term contains a factor
$\xi^x_n$ or $\xi^v_n$ (or both) with the remaining factors
$\mathcal{F}_{t_n}$-measurable, so it vanishes in conditional expectation by
the same argument --- the correlation $\rho$ between $\xi^x_n$ and $\xi^v_n$
only affects $\E[\xi^x_n\xi^v_n\mid\mathcal{F}_{t_n}]=\rho\neq0$, which is a
\emph{signal} term (it produces the $\rho\xi v_{t_n}\dt$ leading contribution
of $Q^{(xv)}_j$), not a source of bias. Summing over $n$ and taking
expectations yields the stated identities, with the remaining
$\mathcal{O}(N\dt^2)$ terms coming from the drift-squared contributions
identified in Theorem~\ref{thm:bias}.
\end{proof}

\begin{theorem}[Strong consistency]
\label{thm:consistency}
Under Assumptions~\ref{ass:feller}--\ref{ass:regularity}, as $T\to\infty$
with $\dt$ fixed,
\begin{equation}
\frac1N A_{jk}\xrightarrow{\text{a.s.}}\bar A_{jk}:=\int K_j(v)\Theta_k(v)\,
\pi(v)\,dv,\qquad
\frac1N B^{(b)}_j\xrightarrow{\text{a.s.}}\int K_j(v)\,b_v(v)\,\pi(v)\,dv .
\end{equation}
If $\bar A$ has full column rank, then $\widehat c\xrightarrow{\text{a.s.}}
c^{*}$, and analogously $\widehat d_{vv},\widehat d_{xx},\widehat d_{xv}
\xrightarrow{\text{a.s.}}d^{*}_{vv},d^{*}_{xv},d^{*}_{xx}$ up to the
deterministic $\mathcal{O}(\dt)$ discretization floor quantified in
Theorem~\ref{thm:bias}.
\end{theorem}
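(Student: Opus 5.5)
The proof applies an ergodic law of large numbers to the design matrix and the deterministic part of each target. It then shows that the stochastic remainders, being martingale sums, are $o(N)$ almost surely, and finishes with a continuous-mapping step for the regression solution.

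\textbf{Design matrix.} We work with the Euler chain $(v_{t_n})_{n\ge0}$ observed at fixed $\dt$. Because $v_t$ is autonomous, this chain depends only on the CIR marginal. Assumption~\ref{ass:feller} makes the skeleton chain $v_{t_n}$ geometrically ergodic with invariant law $\pi$. For an exact skeleton, $\pi$ is the Gamma law. For the Euler scheme, I would instead invoke the discretization floor of Theorem~\ref{thm:bias} and treat the difference as an $\mathcal{O}(\dt)$ perturbation of the invariant measure. Birkhoff's ergodic theorem for a positive Harris recurrent chain then gives $\frac1N\sum_n g(v_{t_n})\to\int g\,d\pi$ a.s. for every $\pi$-integrable $g$. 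The entries $A_{jk}=\sum_n K_j(v_{t_n})\Theta_k(v_{t_n})$ (possibly with a factor $\dt$ absorbed into the normalization) fall under this result with $g=K_j\Theta_k$. This $g$ is integrable because $K_j$ is bounded and the Gamma law has all moments. This yields $\frac1N A\to\bar A$.

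\textbf{Targets.} Following the proof of Theorem~\ref{thm:unbiased}, I write each target as signal plus remainder. For the drift,
\[
B^{(b)}_j=\sum_n K_j(v_{t_n})b_v(v_{t_n})\dt+\sum_n M_{j,n},\qquad M_{j,n}=K_j(v_{t_n})\,\xi\sqrt{v_{t_n}}\,\xi^v_n\sqrt{\dt}.
\]
The signal converges by the ergodic theorem, since $b_v$ is affine and $\pi$-integrable. The $M_{j,n}$ are martingale differences, because they are $\mathcal{F}_{t_{n+1}}$-measurable with zero conditional mean. Their conditional variances $K_j^2\xi^2 v\,\dt$ have bounded ergodic averages. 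Applying the martingale strong law, for example via $\sum_n \E[M_{j,n}^2\mid\mathcal{F}_{t_n}]/n^2<\infty$ a.s. together with Kronecker's lemma, gives $\frac1N\sum_n M_{j,n}\to0$ a.s.

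The quadratic targets are handled the same way. Expand $\dx_n^2$, $\dv_n^2$, and $\dx_n\dv_n$. The term $(\xi^\cdot_n)^2-1$ (respectively $\xi^x_n\xi^v_n-\rho$) times an $\mathcal{F}_{t_n}$-measurable factor is a martingale difference with finite fourth-moment control. Gaussian shocks and Gamma moments make all its conditional moments integrable. The cross terms linear in $\xi$ are also martingale differences, and the drift-squared terms converge to an $\mathcal{O}(\dt)$ limit relative to the signal. Thus
\[
\tfrac1N Q^{(\cdot)}\to \bar A d^*_{(\cdot)}\,\dt+\mathcal{O}(\dt^2),
\]
which is exactly the floor of Theorem~\ref{thm:bias}.

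\textbf{Passage to coefficients.} With $\bar A$ of full column rank, $\frac1N A$ has full rank eventually a.s. The least-squares map $(A,B)\mapsto (A^\top A)^{-1}A^\top B$ is continuous there and is invariant under the common scaling by $1/N$. Hence $\widehat c\to \bar A^{+}\bar A c^*=c^*$ a.s., using exact representability from Assumption~\ref{ass:regularity}, and likewise $\widehat d\to d^*+\mathcal{O}(\dt)$. For the LASSO version I would assume the penalty satisfies $\lambda_N/N\to0$. The penalized objective divided by $N$ then converges uniformly on compacts to the strictly convex limit, and argmin continuity gives the same limits.

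\textbf{Main obstacle.} The delicate step is the ergodic input at fixed $\dt$. The paper states ergodicity for the continuous CIR process, but the data are an Euler chain, which can leave $(0,\infty)$ and whose invariant law is not exactly $\pi$. I would first try to sidestep this by taking the data as exact skeleton samples, which are also geometrically ergodic with invariant law $\pi$. In that case the Euler relation is used only for the conditional-moment expansion, and its mismatch is absorbed into the $\mathcal{O}(\dt)$ floor. Only if that route fails would I fall back on a Lyapunov drift condition with $V(v)=1+v$ for a truncated Euler chain.
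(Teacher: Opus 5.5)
Your proposal follows the same route as the paper's proof: ergodic averages for $A$ and for the signal parts of the targets, a martingale strong law for the noise sums, and continuity of the least-squares map. It is correct at the paper's level of rigor, and more careful than the paper about the $\dt$ factor in $A$, the quadratic targets, and the Euler-versus-CIR ergodicity issue that the paper simply asserts away.

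One caution on your preferred sidestep. With exact skeleton data the conditional mean of $\dv_n$ is $(1-e^{-\kappa\dt})(\theta-v_{t_n})$ rather than $\kappa(\theta-v_{t_n})\dt$, so that route only gives $\widehat c\to c^*$ up to $\mathcal{O}(\dt)$, whereas the theorem claims exact drift consistency. On the Euler chain, exact representability gives $\bar B=\bar A c^*$ under whatever invariant law the chain has, so the coefficient limits need no invariant-measure perturbation argument, which Theorem~\ref{thm:bias} would not supply in any case.
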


\begin{proof}
By Assumption~\ref{ass:feller} the chain $\{v_{t_n}\}$ is geometrically
ergodic with invariant law $\pi$, so Birkhoff's ergodic theorem applied to
the bounded, Lipschitz function $K_j(\cdot)\Theta_k(\cdot)$ gives
$(1/N)\sum_n K_j(v_{t_n})\Theta_k(v_{t_n})\to\int K_j\Theta_k\,d\pi$ a.s.,
and multiplying by $\dt$ (with $N\dt=T$ fixed per unit ergodic average)
yields the stated convergence of $A_{jk}$. Decomposing $B^{(b)}_j$ into a
drift term (which converges by the same ergodic argument) and a martingale
noise term $\sum_n K_j(v_{t_n})\xi\sqrt{v_{t_n}}\xi^v_n\sqrt{\dt}$ whose
predictable quadratic variation grows as $\mathcal{O}(N\dt)=\mathcal{O}(T)$,
the martingale strong law of large numbers gives that the noise term divided
by $N$ vanishes a.s. Continuity of $\widehat c=(\bar A^\top\bar
A)^{-1}\bar A^\top\bar B$ in $(\bar A,\bar B)$ completes the drift argument;
the diffusion arguments for $Q^{(vv)},Q^{(xx)},Q^{(xv)}$ follow identically
once the $\mathcal{O}(\dt^2)$ per-step bias terms of Theorem~\ref{thm:bias}
are absorbed into a deterministic remainder that does not grow with $T$.
\end{proof}

\subsection{Finite-Step Bias in the Four Weak Targets}
\label{sec:bias}

Squaring or cross-multiplying the Euler--Maruyama increments at finite $\dt$
introduces a systematic, non-random contribution from the drift terms, in
direct analogy to the scalar diffusion case of \cite{EshwarHonnavar2026}.
Expanding all three second-moment targets to $\mathcal{O}(\dt^2)$ and taking
conditional expectations given $\mathcal{F}_{t_n}$:

\begin{theorem}[Finite-step bias decomposition]
\label{thm:bias}
Under Assumption~\ref{ass:regularity}, for every $n$,
\begin{align}
\E\!\left[(\dv_n)^2\mid\mathcal{F}_{t_n}\right]
&=a^{vv}(v_{t_n})\,\dt+b_v(v_{t_n})^2\,\dt^2, \label{eq:biasvv}\\
\E\!\left[(\dx_n)^2\mid\mathcal{F}_{t_n}\right]
&=a^{xx}(v_{t_n})\,\dt+b_x(v_{t_n})^2\,\dt^2, \label{eq:biasxx}\\
\E\!\left[\dx_n\dv_n\mid\mathcal{F}_{t_n}\right]
&=a^{xv}(v_{t_n})\,\dt+b_x(v_{t_n})\,b_v(v_{t_n})\,\dt^2, \label{eq:biasxv}
\end{align}
where $a^{vv}(v)=\xi^2v$, $a^{xx}(v)=v$, and $a^{xv}(v)=\rho\xi v$. All
cross terms linear in $\xi^x_n$ or $\xi^v_n$ vanish by
Theorem~\ref{thm:unbiased}; the term quadratic in $\xi^x_n\xi^v_n$
contributes the leading-order signal $\rho\xi v_{t_n}\dt$ to
\eqref{eq:biasxv} through $\E[\xi^x_n\xi^v_n\mid\mathcal{F}_{t_n}]=\rho$,
while the remaining $\mathcal{O}(\dt^2)$ terms are the drift-squared biases.
\end{theorem}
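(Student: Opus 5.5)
The plan is a direct computation from the Euler--Maruyama representation of $(\dx_n,\dv_n)$ in Section~\ref{sec:weakprojection}, conditioning on $\mathcal{F}_{t_n}$. Write $b_x=b_x(v_{t_n})$, $b_v=b_v(v_{t_n})$, $s=\sqrt{v_{t_n}}$. All three are $\mathcal{F}_{t_n}$-measurable, and $(\xi^x_n,\xi^v_n)$ is independent of $\mathcal{F}_{t_n}$ with $\E[\xi^x_n]=\E[\xi^v_n]=0$, $\E[(\xi^x_n)^2]=\E[(\xi^v_n)^2]=1$ and $\E[\xi^x_n\xi^v_n]=\rho$. Each increment has the form $\Delta Z_i=\beta_i\dt+\sigma_i\,\eta_i\sqrt{\dt}$, with
\[
(\beta_x,\sigma_x,\eta_x)=(b_x,s,\xi^x_n),\qquad
(\beta_v,\sigma_v,\eta_v)=(b_v,\xi s,\xi^v_n).
\]
I would therefore prove a single bilinear identity for a generic product $\Delta Z_i\Delta Z_j$ and then specialise it three times, rather than treating the three cases separately.

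Expanding the product gives
\[
\Delta Z_i\Delta Z_j=\beta_i\beta_j\dt^2+\bigl(\beta_i\sigma_j\eta_j+\beta_j\sigma_i\eta_i\bigr)\dt^{3/2}+\sigma_i\sigma_j\eta_i\eta_j\,\dt .
\]
Now take conditional expectations term by term. The first term is $\mathcal{F}_{t_n}$-measurable, so it passes through unchanged. Each middle term has the form (measurable factor)$\times\eta$; as in the proof of Theorem~\ref{thm:unbiased}, pulling out the measurable factor and using independence gives $\E[\eta\mid\mathcal{F}_{t_n}]=\E[\eta]=0$. In the last term, pulling out $\sigma_i\sigma_j$ leaves $\E[\eta_i\eta_j]$, which equals $1$ for $i=j$ and $\rho$ for $i\neq j$.

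Specialising gives the three identities:
\begin{itemize}
\item $(v,v)$: $\sigma_v^2=\xi^2v_{t_n}=a^{vv}$, which is \eqref{eq:biasvv}.
\item $(x,x)$: $\sigma_x^2=v_{t_n}=a^{xx}$, which is \eqref{eq:biasxx}.
\item $(x,v)$: $\sigma_x\sigma_v\rho=\rho\xi v_{t_n}=a^{xv}$, with drift product $b_xb_v\dt^2$, which is \eqref{eq:biasxv}.
\end{itemize}
The identities are exact, with no remaining higher-order terms, because Euler--Maruyama increments are exactly affine in the Gaussian shocks.

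The only step requiring care is justifying the "take out what is known" manipulations. This needs integrability: $\E[v_{t_n}]<\infty$ and $\E[b_x^2],\E[b_v^2]<\infty$, which hold because $b_x,b_v$ are affine in $v$ and the discretized variance has finite second moments. The discretized chain can leave $(0,\infty)$ at finite $\dt$, so I would adopt the convention that $\sqrt{v}$ is applied to $v^+$, or to a truncated or reflected scheme. The identities then hold with $v$ replaced accordingly, and coincide with the stated form on $\{v_{t_n}>0\}$. I expect this positivity and integrability bookkeeping to be the main obstacle rather than the algebra. I would handle it by noting that conditional expectations are taken pointwise given $\mathcal{F}_{t_n}$, where $v_{t_n}$ is fixed, so only finiteness of the Gaussian moments is actually needed at each step.
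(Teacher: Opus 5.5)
Your proposal is correct and follows essentially the same route as the paper. You substitute the Euler--Maruyama increments and split each product into three terms:
\begin{itemize}
\item a pure-drift $\dt^2$ term;
\item a linear-in-noise $\dt^{3/2}$ term, which vanishes because the shocks are independent of $\mathcal{F}_{t_n}$, as in Theorem~\ref{thm:unbiased};
\item a pure-diffusion $\dt$ term, whose conditional mean is $\sigma_i\sigma_j\dt$ times $1$ or $\rho$.
\end{itemize}
Packaging the three cases as one bilinear identity, and adding the positivity and integrability bookkeeping for the discretized variance that the paper's proof omits, are useful refinements rather than a different argument.
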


\begin{proof}
Substitute the Euler--Maruyama increments and expand each square or
cross-product into three terms: a pure-drift term of order $\dt^2$, a
linear-in-noise term of order $\dt^{3/2}$, and a pure-diffusion term of order
$\dt$. The linear-in-noise terms have conditional mean zero by
Theorem~\ref{thm:unbiased}. For \eqref{eq:biasxv}, the pure-diffusion term is
$\xi v_{t_n}\dt\,\xi^x_n\xi^v_n$, whose conditional expectation is
$\rho\,\xi v_{t_n}\dt$ because $(\xi^x_n,\xi^v_n)$ are jointly standard
normal with correlation $\rho$, independent of $\mathcal{F}_{t_n}$.
\end{proof}

Summing \eqref{eq:biasvv} over $n$ against $K_j(v_{t_n})$ gives
$\E[Q^{(vv)}_j]=\sum_k d^{vv}_kA_{jk}+\sum_n K_j(v_{t_n})b_v(v_{t_n})^2\dt^2$,
which is the correction implemented in Section~\ref{sec:kernels}: the
already-fitted $\widehat b_v(v)=\widehat\kappa(\widehat\theta-v)$ is used to
subtract $\sum_n K_j(v_{t_n})\widehat b_v(v_{t_n})^2\dt^2$ before the
diffusion regression, following the same two-step order of operations
(drift, then bias-corrected diffusion) as \cite{EshwarHonnavar2026}. The
biases in \eqref{eq:biasxx} and \eqref{eq:biasxv} are of the same
$\mathcal{O}(\dt^2)$ order but are numerically negligible at the daily step
used throughout this paper: with $\dt=1/252$ and $|b_x(v)|,|b_v(v)|$
$\mathcal{O}(10^{-1})$--$\mathcal{O}(10^0)$ at plausible variance levels,
$b_x(v)^2\dt^2$ and $b_x(v)b_v(v)\dt^2$ are of order $10^{-6}$--$10^{-5}$,
several orders of magnitude below the $\mathcal{O}(\dt)=\mathcal{O}(10^{-3})$
diffusion signal $a^{xx}(v)\dt$ or $a^{xv}(v)\dt$. This is why the explicit
correction is applied only to $Q^{(vv)}$, where $b_v(v)^2$ can be comparable
in magnitude to $a^{vv}(v)$ near the boundary of the state range: the
correction is a consequence of Theorem~\ref{thm:bias}, not an ad hoc choice.

\begin{proposition}[Relative statistical efficiency of drift and diffusion recovery]
\label{prop:efficiency}
Under Assumptions~\ref{ass:feller}--\ref{ass:regularity}, for fixed span $T$
the diffusion-type estimators $\widehat\xi,\widehat\rho,\widehat{\rho\xi}$
attain relative error $\mathcal{O}(\sqrt{\dt/T})$ as $\dt\to0$ (in-fill
consistency of quadratic variation), whereas the drift-type estimators
$\widehat\kappa,\widehat\theta$ attain relative error $\mathcal{O}(1/\sqrt
T)$ that does \emph{not} improve as $\dt\to0$ at fixed $T$ (long-span
consistency of the mean-reversion parameters)
\cite{GenonCatalotJacod1993,BibbySorensen1995,AitSahalia2002}.
\end{proposition}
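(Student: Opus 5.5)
The plan is to work with the same signal-plus-martingale decomposition used in the proofs of Theorems~\ref{thm:unbiased} and~\ref{thm:consistency}. The difference is that $N=T/\dt$ now grows because $\dt\to0$ at fixed $T$, rather than because $T\to\infty$. For every weak target I will write $\widehat\beta-\beta^*=(A^\top A)^{-1}A^\top(M+R)$. Here $M$ is the martingale noise vector, $R$ is the deterministic finite-step remainder from Theorem~\ref{thm:bias}, and $A_{jk}=\sum_n K_j(v_{t_n})\Theta_k(v_{t_n})\dt$. As $\dt\to0$, $A$ converges a.s. to the Riemann integral $\int_0^T K_j(v_s)\Theta_k(v_s)\,ds$, which is $\mathcal{O}(T)$ and, for nondegenerate paths, of full rank. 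The proof then reduces to comparing the size of $M$ in the drift target with its size in the quadratic targets.

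\textbf{Diffusion targets.} For $Q^{(vv)}$ the noise is $M_j=\sum_n K_j(v_{t_n})\,\xi^2 v_{t_n}\dt\,((\xi^v_n)^2-1)$ plus cross terms of order $\dt^{3/2}$. Its predictable quadratic variation is $\sum_n K_j^2\,2\xi^4v^2\dt^2=\mathcal{O}(T\dt)$, so $M_j=\mathcal{O}_p(\sqrt{T\dt})$. Dividing by $A=\mathcal{O}(T)$ gives $\widehat d_{vv}-d^*_{vv}=\mathcal{O}_p(\sqrt{\dt/T})$. The same holds for $Q^{(xx)}$. It also holds for $Q^{(xv)}$, where $\xi^x_n\xi^v_n-\rho$ has variance $1+\rho^2$. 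The remainder $R$ is $\mathcal{O}(T\dt)$ before correction, and dividing by $T$ leaves $\mathcal{O}(\dt)=o(\sqrt{\dt})$. For $Q^{(vv)}$ the drift-informed correction removes it further. The quantities $\widehat\xi=\sqrt{\widehat d_{vv,1}}$, $\widehat{\rho\xi}=\widehat d_{xv,1}$ and $\widehat\rho=\widehat{\rho\xi}/\widehat\xi$ are smooth functions of these coefficients, with $\xi>0$. A delta-method step therefore transfers the $\mathcal{O}_p(\sqrt{\dt/T})$ rate to their relative errors.

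\textbf{Drift targets.} For $B^{(b)}$ the noise is $\sum_n K_j\,\xi\sqrt{v_{t_n}}\,\xi^v_n\sqrt{\dt}$. Its quadratic variation is $\sum_n K_j^2\xi^2 v\,\dt\to\int_0^T K_j^2\xi^2v_s\,ds=\Theta(T)$, which does not vanish as $\dt\to0$. So $M_j=\mathcal{O}_p(\sqrt T)$ and $\widehat c-c^*=\mathcal{O}_p(1/\sqrt T)$. To show this rate cannot improve, I will argue that the in-fill limit of the estimator is the continuous-record estimator $\left(\int K\Theta\,ds\right)^{-1}\int K\,dv_s$. This limit carries a nondegenerate stochastic integral error of variance $\asymp T$ and is independent of $\dt$. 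A cleaner way to show non-improvement is to cite that continuous-record CIR drift parameters are not consistent at fixed $T$: the measures with different $\kappa,\theta$ are mutually absolutely continuous on $[0,T]$ by Girsanov, with a finite Fisher information of order $T$. This gives the Cram\'er--Rao lower bound $\gtrsim 1/\sqrt T$ for any estimator \cite{GenonCatalotJacod1993,AitSahalia2002}. Diffusion parameters, by contrast, are identified by quadratic variation in-fill \cite{BibbySorensen1995}. Finally, $\widehat\kappa=-\widehat c_1$ and $\widehat\theta=\widehat c_0/\widehat\kappa$ follow by the delta method.

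\textbf{Main obstacle.} The hardest step is making the in-fill statements rigorous at fixed $T$, where Assumption~\ref{ass:feller} (ergodicity) gives nothing. I need three things: pathwise convergence of the Riemann sums, a martingale CLT for triangular arrays (via Lindeberg, using bounded kernels and finite moments of $v$), and invertibility of the random limit $\int_0^T K\Theta^\top ds$. I would treat $T$ as large enough that this matrix is close to $T\bar A$, using the ergodic argument of Theorem~\ref{thm:consistency}. This is what yields the stated orders with $T$-dependence rather than only random constants.
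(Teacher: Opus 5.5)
Your proposal is correct and follows the same route as the paper's sketch: split each weak target into a signal of order $T$ plus a martingale remainder, and compare the remainder's predictable quadratic variation, which is $\mathcal{O}(T\dt)$ for the quadratic targets but $\mathcal{O}(T)$ for the drift target. Your version is somewhat sharper in two places. Writing the coefficient error as $(A^\top A)^{-1}A^\top(M+R)$ shows that the path-dependent signal is absorbed exactly into $Ad^{*}$, so the $\mathcal{O}(\sqrt T)$ ergodic fluctuation of the target itself never enters; the paper's phrase ``fluctuation about the ergodic mean'' leaves this implicit. Your continuous-record limit and Girsanov lower bound also justify the ``does not improve as $\dt\to0$'' clause, which the paper supports only heuristically.
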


\begin{proof}[Proof sketch]
The diffusion target $Q^{(vv)}_j$ accumulates $N=T/\dt$ approximately
independent contributions of order $\dt$ each; by the martingale CLT
(Theorem~\ref{thm:clt} below) its fluctuation about the ergodic mean is
$\mathcal{O}(\sqrt{N}\,\dt)=\mathcal{O}(\sqrt{T\dt})$ against a signal of
order $N\dt=T$, giving relative error $\mathcal{O}(\sqrt{\dt/T})\to0$ as
$\dt\to0$ for any fixed $T$. The drift target $B^{(b)}_j$ accumulates
signal of order $T$ against a martingale noise term whose quadratic
variation grows as $\mathcal{O}(N\dt)=\mathcal{O}(T)$, giving fluctuation
$\mathcal{O}(\sqrt T)$ and relative error $\mathcal{O}(1/\sqrt T)$
regardless of $\dt$: refining the sampling frequency adds more terms to the
same finite $T$-window of mean-reversion information rather than shrinking
the estimation-theoretic floor. This asymmetry --- classical for diffusion
processes observed at high frequency over a finite horizon
\cite{StantonNonparametric1997,BandiPhillips2003,BarndorffNielsenShephard2002}
--- is the theoretical counterpart of the empirical pattern in
Table~\ref{tab:baseline}, where $\xi$, $\rho$, and $\rho\xi$ are recovered to
under 2\% median error while $\kappa$ and $\theta$ require an order of
magnitude more data to reach the same precision.
\end{proof}

\begin{theorem}[Asymptotic normality]
\label{thm:clt}
Under Assumptions~\ref{ass:feller}--\ref{ass:regularity}, as $T\to\infty$,
\begin{equation}
\sqrt T\left(\frac1N B^{(b)}_j-\bar B^{(b)}_j\right)
\xrightarrow{d}\mathcal N(0,V_j),\qquad
V_j=\sum_{\ell=-\infty}^{\infty}
\mathrm{Cov}\!\left[K_j(v_{t_0})\dv_0,\,K_j(v_{t_\ell})\dv_\ell\right],
\end{equation}
with $V_j<\infty$ by geometric ergodicity (Assumption~\ref{ass:feller}); the
same statement holds for $Q^{(vv)},Q^{(xx)},Q^{(xv)}$ with the corresponding
autocovariance sums. Consequently the standard errors of $\widehat\kappa$,
$\widehat\theta$, $\widehat\xi$, and $\widehat\rho$ all decay at the
parametric rate $T^{-1/2}$, with the proportionality constants governed by
Proposition~\ref{prop:efficiency}.
\end{theorem}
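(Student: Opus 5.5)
The plan is to write each weak target as an ergodic signal term plus a martingale remainder, and to treat the two pieces separately with a Markov-chain CLT and a martingale CLT. Take the drift target first. With $N=T/\dt$ and $\dt$ fixed, write
\begin{equation}
\frac1N B^{(b)}_j-\bar B^{(b)}_j
=\underbrace{\frac1N\sum_{n}\Big(g_j(v_{t_n})-\textstyle\int g_j\,d\pi\Big)}_{S_N}
+\underbrace{\frac1N\sum_n K_j(v_{t_n})\,\xi\sqrt{v_{t_n}}\,\xi^v_n\sqrt{\dt}}_{M_N},
\end{equation}
where $g_j(v)=K_j(v)b_v(v)\dt$. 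The noise is factored exactly as in the proof of Theorem~\ref{thm:unbiased}.

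For the signal term $S_N$, I would invoke the CLT for geometrically ergodic Markov chains, such as the Kipnis--Varadhan or Ibragimov--Linnik form for $\alpha$-mixing sequences. Assumption~\ref{ass:feller} gives exponential $\beta$-mixing of the stationary chain $\{v_{t_n}\}$, and $g_j$ has moments of every order because $K_j$ is bounded and the Gamma law $\pi$ has moments of all orders. These two facts give absolute summability of the autocovariances. The cleanest route is to solve the Poisson equation $g-Pg=\bar g$ with $\hat g=\sum_k P^k\bar g$. This series converges by the geometric bound in Assumption~\ref{ass:feller}, which I would extend from bounded $g$ to $g$ with a polynomial weight by a drift (Lyapunov) argument with $V(v)=1+v^2$. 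One then writes $S_N$ as a martingale plus a telescoping boundary term of order $O(1/N)$.

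The noise term $M_N$ is already a martingale-difference sum with respect to $\mathcal{F}_{t_{n+1}}$. The two martingale pieces are adapted to the same filtration, so I would add them to get a single martingale $\sum_n D_n$ and apply the martingale CLT (Hall--Heyde). This needs two conditions:
\begin{enumerate}[leftmargin=*,itemsep=2pt]
\item \emph{Conditional variance.} The normalized sum $\frac1N\sum_n \E[D_n^2\mid\mathcal{F}_{t_n}]$ converges a.s. to a constant. This follows from the ergodic theorem, as in the proof of Theorem~\ref{thm:consistency}.
\item \emph{Lindeberg.} This follows from a uniform fourth-moment bound. Gaussian shocks and bounded kernels give $\E D_n^4\le C\,\E(1+v^2)$, which is finite under $\pi$.
\end{enumerate}
Scaling by $\sqrt N=\sqrt{T/\dt}$ instead of $\sqrt T$ changes only a constant factor absorbed into $V_j$. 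Identifying the limiting variance with the stated autocovariance series $\sum_\ell\mathrm{Cov}[K_j(v_{t_0})\dv_0,K_j(v_{t_\ell})\dv_\ell]$ is the standard equality between the martingale-approximation variance and the long-run variance of the stationary sequence $K_j(v_{t_n})\dv_n$. The series converges absolutely by the mixing bound, which gives $V_j<\infty$.

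The quadratic targets $Q^{(vv)},Q^{(xx)},Q^{(xv)}$ are handled the same way. The summand $K_j(v_{t_n})\dv_n^2$ (or $\dx_n^2$, or $\dx_n\dv_n$) is a function of $(v_{t_n},\xi^x_n,\xi^v_n)$, hence of an ergodic Markov chain on an augmented state. Here $x$ never enters, because $b_x$ and $a$ depend only on $v$; this is the cascade structure noted after Assumption~\ref{ass:regularity}. The per-step $\mathcal{O}(\dt^2)$ bias of Theorem~\ref{thm:bias} is deterministic given $v_{t_n}$, so it is absorbed into the centering constant. The joint CLT for the stacked vector $(A,B^{(b)},Q^{(\cdot)})$ then follows by the Cram\'er--Wold device applied to linear combinations of the martingale differences.

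The final claim about standard errors uses the delta method. The estimates $\widehat c,\widehat d$ are smooth functions $(\bar A^\top\bar A)^{-1}\bar A^\top\bar B$ of the normalized sums, and $\bar A$ has full column rank as in Theorem~\ref{thm:consistency}. The parameter maps $(c_0,c_1)\mapsto(\kappa,\theta)=(-c_1,-c_0/c_1)$, $\xi=\sqrt{d^{vv}_1}$ and $\rho=d^{xv}_1/\xi$ are differentiable away from $c_1=0$ and $\xi=0$, so they inherit $T^{-1/2}$ rates.

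The main obstacle is the moment/mixing extension. Assumption~\ref{ass:feller} gives geometric ergodicity only in total variation for bounded $g$, while the summands involve $\sqrt v$, $v$ and $v^2$ times bounded kernels. I would first try to prove a Foster--Lyapunov drift condition for the Euler-discretized CIR chain with $V(v)=1+v^2$. A risk is that the Euler chain can leave $(0,\infty)$, so in practice one may need to assume a truncated or reflected scheme. I would then invoke Meyn--Tweedie $V$-uniform ergodicity, which controls $\sum_\ell|\mathrm{Cov}|$ for $|g|\le V^{1/2}$ and supplies the Poisson solution used above.
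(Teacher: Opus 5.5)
Your proposal is correct in substance and has the same skeleton as the paper's proof: an ergodic signal sum controlled by geometric ergodicity, plus a martingale-difference noise sum controlled by the martingale CLT. Where you differ from the paper, your version is the right one.

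The paper obtains $V_j$ by \emph{adding} the limiting variances of the two pieces, which ignores their covariance. The shock $\xi^v_n$ in $m_n=K_j(v_{t_n})\xi\sqrt{v_{t_n}}\,\xi^v_n\sqrt{\dt}$ enters every later state $v_{t_{n+\ell}}$, so $\mathrm{Cov}[m_0,g_j(v_{t_\ell})]\neq0$ for $\ell\ge1$. The constant kernel $K_j\equiv1$, which Assumption~\ref{ass:regularity} allows, makes this stark. There $\sum_n\dv_n=v_{t_N}-v_{t_0}$ telescopes, so the long-run variance is $0$, while the two separate limiting variances are both strictly positive (each equals $\xi^2\theta\dt$ under the Euler recursion). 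Solving the Poisson equation and merging both pieces into one martingale $\sum_nD_n$, as you do, gives the full autocovariance series with the cross terms included. It also gives the joint convergence that your Cram\'er--Wold and delta-method steps need.

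The paper's proof does not supply the delta-method step for $\widehat\kappa,\widehat\theta,\widehat\xi,\widehat\rho$ at all. Like Theorem~\ref{thm:consistency}, that step treats $\widehat c$ as the least-squares solution. For the cross-validated LASSO actually used, you would also need the selected penalty to be negligible at the $T^{-1/2}$ scale.

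Two things to tighten. First, the factor between $\sqrt N$ and $\sqrt T$ cannot be ``absorbed into $V_j$'', because $V_j$ is defined by an explicit series. With $T=N\dt$, your argument gives $\sqrt T\bigl(\frac1NB^{(b)}_j-\bar B^{(b)}_j\bigr)\xrightarrow{d}\mathcal N(0,\dt\,V_j)$, or equivalently $\mathcal N(0,V_j)$ under $\sqrt N$ scaling. The display as stated is therefore off by the fixed factor $\dt$, and you should say so rather than claim both.

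Second, most of the obstacle you flag comes from routing the argument through the Euler recursion, which your decomposition never needs. The summand $Y_n=K_j(v_{t_n})\dv_n$ splits as $\E[Y_n\mid\mathcal F_{t_n}]$ plus a martingale difference for any Markov chain. If you take the observations to be exact samples of the Heston pair, which is what Assumption~\ref{ass:feller} describes, then:
\begin{itemize}
\item positivity is automatic;
\item the conditional mean is $K_j(v)(\theta-v)(1-e^{-\kappa\dt})$;
\item the invariant law really is the Gamma law you use for moments;
\item the relevant chain is the CIR skeleton, whose ergodicity is assumed.
\end{itemize}
The quadratic summands are functions of $(v_{t_n},v_{t_{n+1}},\dx_n)$, and the conditional law of $(v_{t_{n+1}},\dx_n)$ given $\mathcal F_{t_n}$ depends only on $v_{t_n}$.

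For the Gaussian kernels actually used, $K_j(v)$ times any polynomial in $\sqrt v$ is bounded. So $g_j$ and the conditional variances need no weighted extension. What remains of your concern is that the ergodicity constant must depend on the starting point: CIR is not uniformly ergodic, so a uniform $C$ cannot hold. With $C(v)$ of polynomial growth, the stationary skeleton is exponentially $\beta$-mixing. The Ibragimov-type CLT you already cite then applies directly to $Y_n$, which inherits this mixing rate and has moments of all orders. That yields the autocovariance-series variance without the Poisson equation.
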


\begin{proof}
Identical in structure to the scalar martingale CLT argument of
\cite{EshwarHonnavar2026}: decompose each target into an ergodic drift-type
sum, which obeys the Markov-chain CLT under geometric ergodicity, and a
martingale-difference noise sum, which obeys the martingale CLT with
predictable quadratic variation $\mathcal O(N\dt)=\mathcal O(T)$; summing
the two limiting variances gives $V_j$, which is finite because the
autocovariances of the geometrically ergodic chain $\{v_{t_n}\}$ decay
geometrically and are therefore absolutely summable.
\end{proof}

\subsection{Weak Form Kernels}
\label{sec:kernels}

For observations $(x_n,v_n)$ separated by $\dt$, let
$\dx_n=x_{n+1}-x_n$ and $\dv_n=v_{n+1}-v_n$. For Gaussian kernels
\begin{equation}
K_j(v)=\exp\!\left[-\frac{(v-c_j)^2}{2h^2}\right],
\end{equation}
and library $\Theta(v)=[1,v]$, the shared design matrix is
\begin{equation}
A_{jk}=\sum_n K_j(v_n)\Theta_k(v_n)\dt.
\end{equation}
Every kernel defines a soft neighborhood in variance space. Within each
neighborhood we use four targets: variance increments for drift, squared
variance increments for variance diffusion, squared price increments for
price diffusion, and cross-products of the two increments for leverage:
\begin{align}
B^{(b)}_j&=\sum_n K_j(v_n)\dv_n, &
B^{(vv)}_j&=\sum_n K_j(v_n)(\dv_n)^2,\\
B^{(xx)}_j&=\sum_n K_j(v_n)(\dx_n)^2, &
B^{(xv)}_j&=\sum_n K_j(v_n)\dx_n\dv_n .
\end{align}
The construction is coupled through $(\dx_n,\dv_n)$ but localized only in the
variance coordinate. This choice is deliberate: all Heston coefficients of
interest are functions of $v$, while $x$ enters through its increments. The
method is therefore not a general two-dimensional kernel regression over
$(x,v)$.
Following Theorem~\ref{thm:bias}, the $\mathcal O(\dt^2)$ fitted-drift
contribution $\sum_n K_j(v_{t_n})\widehat b_v(v_{t_n})^2\dt^2$ is subtracted
from the $B^{(vv)}$ target before the diffusion regression; the analogous
biases in $B^{(xx)}$ and $B^{(xv)}$ are left uncorrected because
Section~\ref{sec:bias} shows they are three to four orders of magnitude
below the leading diffusion signal at the daily step used here. The fitted
linear coefficients give
\begin{equation}
\widehat\kappa=-\widehat b_1,\qquad
\widehat\theta=\widehat b_0/\widehat\kappa,\qquad
\widehat\xi=\sqrt{\max(\widehat a^{vv}_1,0)}.
\end{equation}
Two leverage normalizations must be distinguished:
\begin{align}
\widehat\rho_H
&=\frac{\widehat a^{xv}_1}{\widehat\xi},
&&\text{imposing the Heston restriction }a^{xx}_1=1,\\
\widehat\rho_{\mathrm{corr}}
&=\frac{\widehat a^{xv}_1}
{\sqrt{\widehat a^{xx}_1\widehat a^{vv}_1}},
&&\text{using all three fitted covariance slopes}.
\end{align}
These two quantities answer different questions. Only
$\widehat\rho_{\mathrm{corr}}$ is a fitted covariance correlation when the
price-diffusion slope is estimated. We also retain $\widehat\rho_H$ because it
reproduces the original empirical $-0.33$ result under the standard Heston
normalization $a^{xx}_1=1$.
The cross-diffusion slope $\widehat{\rho\xi}=\widehat a^{xv}_1$ is reported
directly in the noise ablation because it avoids compounding the
cross-variation error with error in $\widehat\xi$.

\subsection{Numerical Invariants}

Table~\ref{tab:pipeline} lists the settings used for every primary estimate.
Holding these choices fixed is important because it makes differences across
phases attributable to the data or experiment, rather than to a change of
estimator.
The targeted kernel diagnostic deliberately varies its displayed
localization settings, and the nonlinear falsification deliberately adds
$v^2$ to the library. This removes the earlier ambiguity caused by switching
among OLS, ridge, LASSO, shifted states, and different kernel settings.

\begin{table}[H]
\centering
\caption{Invariant recovery configuration.}
\label{tab:pipeline}
\begin{tabular}{ll}
\toprule
Component & Setting\\
\midrule
State/increment alignment & contemporaneous, unshifted\\
Primary library & $[1,v]$\\
Gaussian kernels & 50 centers, 5th--95th state percentiles\\
Bandwidth & $1.5\,\mathrm{sd}(v)/\sqrt{50}$\\
Regression & LASSO, no fitted intercept\\
Penalty selection & five-fold cross-validation over 100 penalties\\
Preconditioning & Euclidean normalization of design columns\\
Phase 1 reference design & internal step $10^{-4}$ year; daily sampling; 100 years\\
Empirical smoothing & one-sided EWMA, span 14\\
\bottomrule
\end{tabular}
\end{table}

\subsection{Indexing and interpretation}

The indexing convention fixes what the empirical coefficient means. No state
shift is used. In all recoveries, $v_n$ weights the increments from
index $n$ to $n+1$. For the empirical OHLC applications, the day-$n$ proxy
uses day-$n$ high, low, open, and close information, while the price series is
indexed by daily opens. Consequently the empirical fit is not a causal
forecasting regression available at the beginning of day $n$. It is a
descriptive contemporaneous cross-variation convention. This explicit
interpretation is essential: shifting the proxy changes the estimand and, in
the S\&P experiment, removes the retained leverage estimate.

\subsection{Generated states and errors in variables}

The variance state in market data is generated rather than observed. If an
empirical proxy satisfies $\widetilde v_n=v_n+\eta_n$, then
\begin{equation}
(\Delta\widetilde v_n)^2
=(\Delta v_n)^2
+2\Delta v_n\Delta\eta_n
+(\Delta\eta_n)^2 .
\end{equation}
Even for independent, mean-zero measurement errors, the last term contributes
approximately $2\operatorname{Var}(\eta)$ to the unweighted quadratic
increment. In addition, using $\widetilde v_n$ inside the kernels mislocates
observations in state space. LASSO and cross-validation do not correct either
effect. This is why a smoother can reduce visible proxy noise and still bias
the recovered generator. This measurement-error bias is structurally
different from --- and additive to --- the finite-step drift-squared bias of
Theorem~\ref{thm:bias}: the latter is $\mathcal O(\dt^2)$, vanishes as
$\dt\to0$, and is removed by the two-step correction of
Section~\ref{sec:bias}, whereas the $2\operatorname{Var}(\eta)$ term above
does not vanish with $\dt$ and is not addressed by that correction. The
comparison below therefore evaluates state construction and attenuation; it
is not presented as a formal errors-in-variables correction.

\section{Experiments and results}

The experiments move from controlled synthetic data to market proxies. Phase
1 checks the estimator when the variance state is observed. Phase 2 corrupts
that same state with measured amounts of noise. Phase 3 constructs variance
from OHLC prices, and Phases 4--7 test which conclusions survive across market
data, repeated simulations, a nonlinear alternative, and different proxies.
This ordering separates estimator error from state-construction error.

\subsection{Phase 1: multi-path latent-state recovery}

The first question is whether the fixed pipeline can recover the Heston
coefficients when the correct latent variance is available. We simulate the
model by Euler--Maruyama at an internal step of
$10^{-4}$ year and observe the latent log price and variance at daily
intervals for 100 years. This is the observation design of the original
notebook, with one correction: daily observation times are rounded on the
cumulative fine grid instead of truncating every day to 39 fine steps. This
removes the inherited clock drift without changing the estimator. The
generating parameters are
$(\kappa,\theta,\xi,\rho)=(2,0.04,0.3,-0.7)$. We repeat this exact
fine-step design for 30 independent seeds. Table~\ref{tab:baseline} reports
the resulting distribution rather than selecting one favorable path.

\begin{table}[H]
\centering
\caption{Latent-state recovery across 30 independent fine-step paths.}
\label{tab:baseline}
\small
\begin{tabular}{lrrrrr}
\toprule
Parameter & Truth & Median estimate & 10th--90th estimate & Median error\%\\
\midrule
$\kappa$ & 2.000 & 1.852 & $[1.540,2.262]$ & 12.29\% \\
$\theta$ & 0.0400 & 0.04034 & $[0.03653,0.04531]$ & 6.61\% \\
$\xi$ & 0.300 & 0.2973 & $[0.2946,0.3000]$ & 0.90\% \\
$\rho_{\mathrm{corr}}$ & $-0.700$ & $-0.7003$ & $[-0.7121,-0.6941]$ & 0.53\% \\
$\rho\xi$ & $-0.210$ & $-0.2064$ & $[-0.2135,-0.2040]$ & 1.80\% \\
\bottomrule
\end{tabular}
\end{table}

Every seed is below
5\% error for $\xi$, $\rho_{\mathrm{corr}}$, and $\rho\xi$, so the diffusion
and leverage result is stable across paths. The same threshold is reached in
only 7 of 30 fits for $\kappa$ and 10 of 30 for $\theta$; the corresponding
worst errors are 32.54\% and 21.46\%. This gap is exactly the pattern
predicted by Proposition~\ref{prop:efficiency}: the diffusion- and
leverage-type coefficients $\xi$, $\rho$, and $\rho\xi$ are governed by
in-fill consistency and improve with sampling frequency at fixed span,
whereas $\kappa$ and $\theta$ are governed by long-span consistency and are
limited by the 100-year horizon regardless of how finely the path is
sampled. Phase 1 therefore supports robust
latent-state diffusion and leverage recovery, not uniform sub-5\% recovery
of all Heston parameters. Figure~\ref{fig:baseline} retains seed 42 only to
show the fitted weak functions, while Fig.~\ref{fig:baseline-errors} displays
the full 30-seed error distributions.

\begin{figure}[H]
\centering
\includegraphics[width=\textwidth]{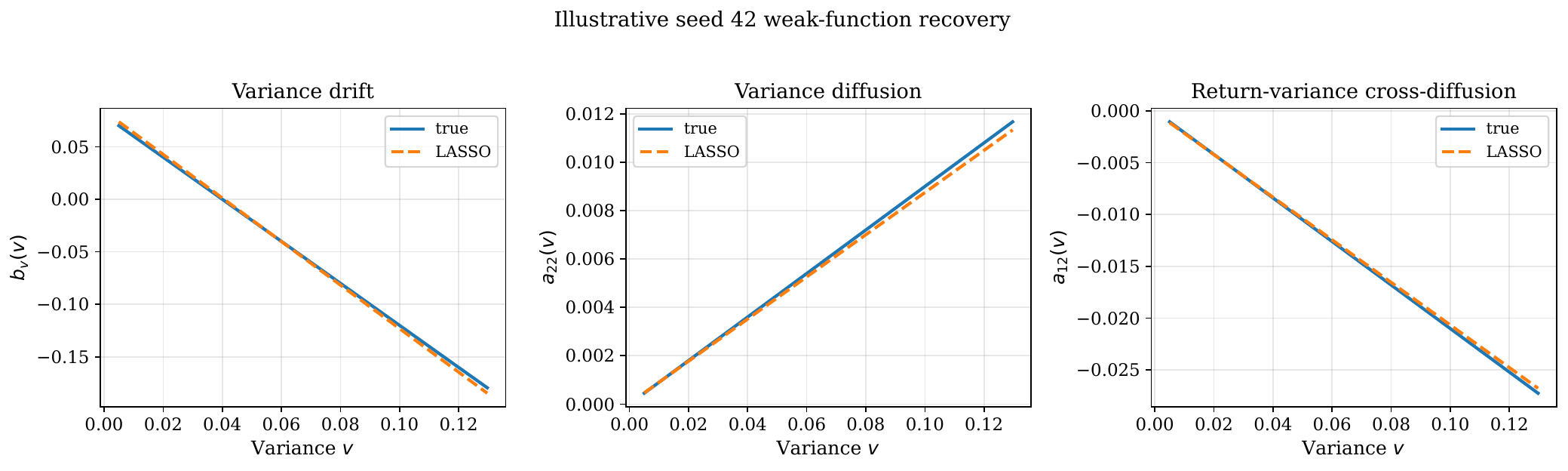}
\caption{Illustrative seed-42 weak-form fit. This path explains the recovered
functions but is not used as the multi-seed robustness summary.}
\label{fig:baseline}
\end{figure}

\begin{figure}[H]
\centering
\includegraphics[width=0.76\textwidth]{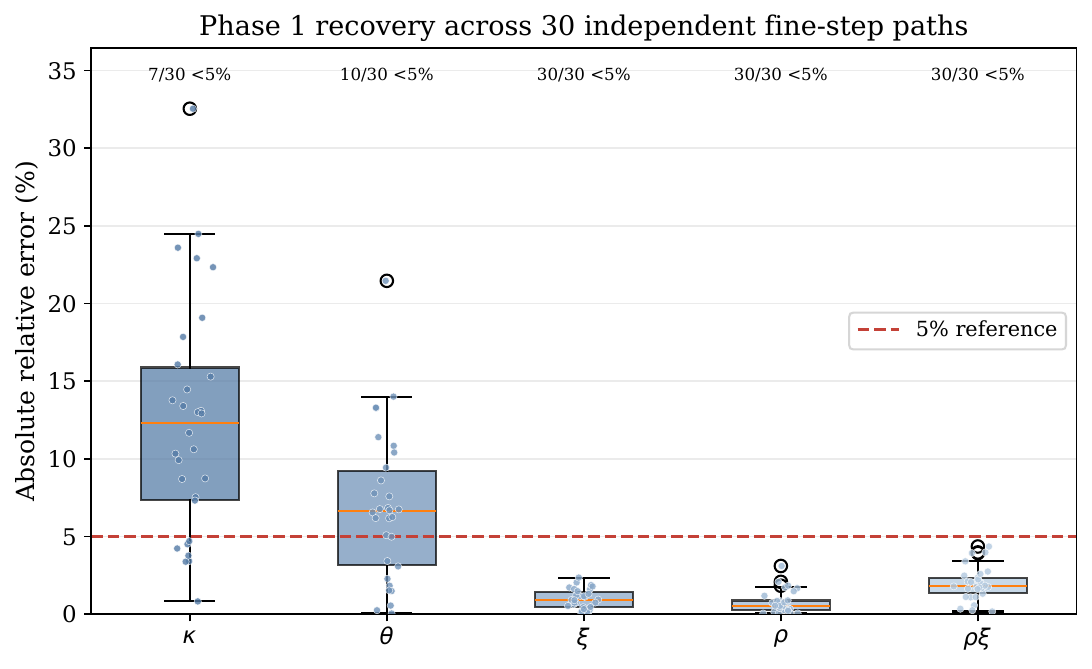}
\caption{Relative-error distributions across the 30 exact Phase 1 paths. The
5\% line separates the uniformly accurate diffusion and leverage quantities
from the less stable drift parameters; it is a descriptive reference, not a
general empirical acceptance threshold.}
\label{fig:baseline-errors}
\end{figure}

\subsubsection{Targeted kernel sensitivity}

Kernel placement controls which observations are averaged together. The
primary configuration remains 50 kernels and bandwidth factor 1.5. As a
targeted localization diagnostic, we additionally evaluate
$5\times5$ combinations of 25--100 kernels and bandwidth factors 0.75--3.0
using the illustrative seed-42 path and the same LASSO estimator. Across this grid, the relative error
of $\rho\xi$ remains between 1.81\% and 2.25\%. Drift recovery is more
sensitive: $\kappa$ error ranges from 0.38\% to 12.89\%, with wider kernels at
low kernel counts performing worst. The leverage result is therefore more
stable to localization choices than the drift estimate. The full grid in
Fig.~\ref{fig:kernel-sensitivity} shows that this path-specific localization
statement is not based only on the primary 50-kernel cell. It does not replace
the 30-seed variability summarized above.

\begin{figure}[H]
\centering
\includegraphics[width=0.92\textwidth]{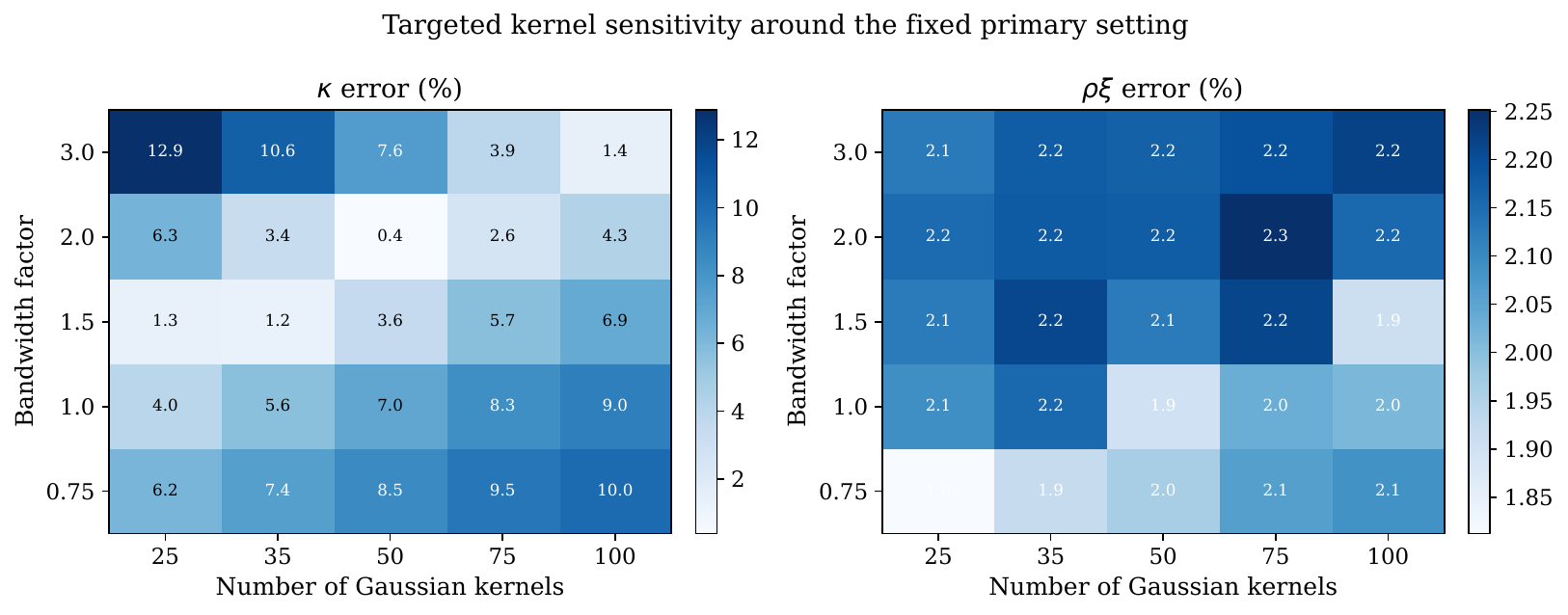}
\caption{LASSO sensitivity to kernel count and bandwidth. These are targeted
diagnostics around the fixed primary setting; they are not alternative
pipelines selected after seeing the empirical result.}
\label{fig:kernel-sensitivity}
\end{figure}
\FloatBarrier

\subsection{Phase 2: leverage robustness to state noise}

The next question is how leverage recovery degrades when only the state is
corrupted. The price path, true parameters, kernels, and LASSO pipeline are
kept fixed, so the experiment isolates state measurement error. The previous
presentation used one curve, emphasized $\kappa$, and described a threshold
near 32\%. We instead evaluate the quantity most directly tied to leverage,
$\rho\xi$. For each of 20 noise
levels, 100 independently generated Gaussian noise vectors are added to the
same reference variance path:
\begin{equation}
\widetilde v_n=\max\{v_n+\epsilon_n,10^{-12}\},\qquad
\epsilon_n\sim N(0,\gamma^2\,\mathrm{sd}(v)^2),
\end{equation}
where $\gamma$ ranges from 0 to 0.50. Thus the uncertainty band quantifies
measurement-noise draws conditional on one simulated path; it is not a
between-path confidence interval. Table~\ref{tab:noise} gives representative
grid points, and Fig.~\ref{fig:noise} shows all 20 levels and their dispersion.

\begin{table}[H]
\centering
\caption{Selected points from the replicated leverage-noise ablation. Errors
are relative absolute errors in $\rho\xi$; intervals are the 10th--90th
percentiles over 100 noise draws.}
\label{tab:noise}
\begin{tabular}{rrrr}
\toprule
Noise SD / state SD & Median error & 10th percentile & 90th percentile\\
\midrule
0.0\%  & 2.12\% & 2.12\% & 2.12\%\\
15.8\% & 4.78\% & 1.29\% & 8.16\%\\
31.6\% & 12.07\% & 5.70\% & 18.66\%\\
42.1\% & 17.37\% & 8.87\% & 26.98\%\\
50.0\% & 22.94\% & 11.66\% & 35.50\%\\
\bottomrule
\end{tabular}
\end{table}

The main result is gradual degradation rather than sudden failure. The
regenerated results do not establish a sharp phase transition, and the earlier
15\% horizontal reference had no external or decision-theoretic justification.
At 50\% injected
noise, a median 7.08\% of states are projected back to positivity; the
high-noise errors therefore include both measurement corruption and the
effect of this declared projection rule.

\begin{figure}[H]
\centering
\includegraphics[width=0.72\textwidth]{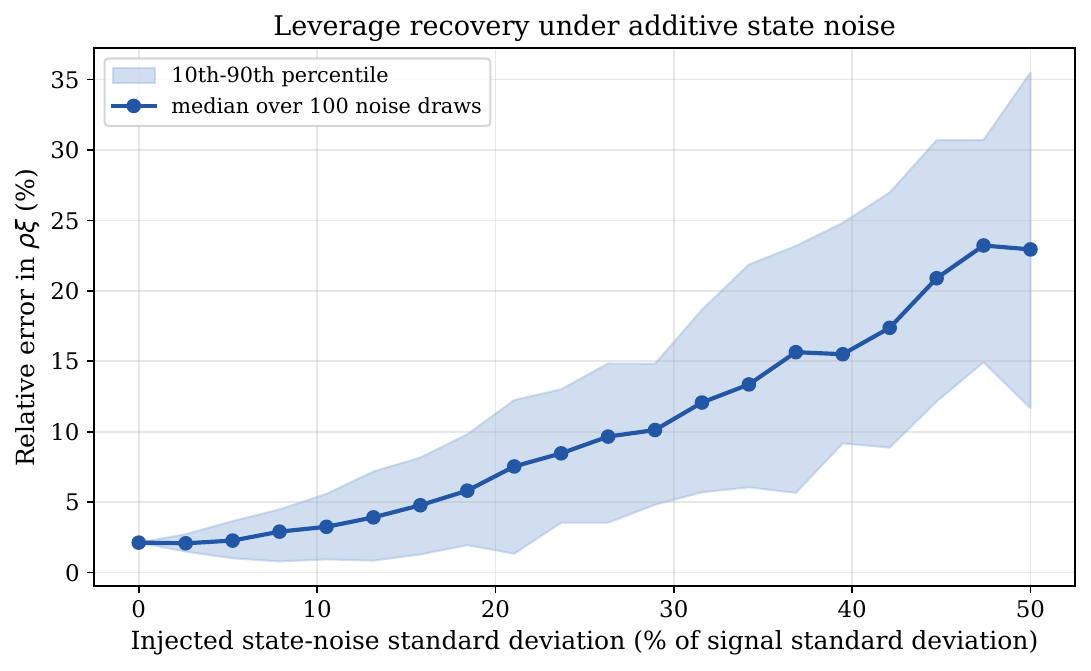}
\caption{Median leverage-coefficient error with a 10th--90th percentile band.
The plotted data and all 2,000 repeat-level estimates are supplied as CSV
files.}
\label{fig:noise}
\end{figure}
\FloatBarrier

\subsection{Phase 3: one-sided variance-proxy smoothing}

The noise ablation starts from an observed state, whereas market applications
must build the state from prices. This phase asks whether a one-sided filter
can produce a smoother daily variance series without sacrificing the
coefficients that we want to recover. Daily OHLC observations are generated
from one fixed-seed, 25,000-day Heston
simulation with 39 internal steps per day. The filter ranking is therefore a
within-path comparison rather than a cross-path guarantee. The raw
Garman--Klass variance proxy is
\begin{equation}
\widehat v^{GK}_n=\frac{1}{\dt}\left[
\frac12(\log H_n-\log L_n)^2-
(2\log2-1)(\log C_n-\log O_n)^2\right]
\cite{GarmanKlass1980}.
\end{equation}
We compare raw GK, EWMA spans 3--60, one-sided rolling means and medians,
outlier-clipped EWMA, log-EWMA, and local-level Kalman filters. The
predeclared switch rule replaces EWMA-14 only if an alternative has at least
20\% lower residual noise-to-signal ratio (NSR) and no larger relative error
in either $\kappa$ or $\rho\xi$.

Raw GK has residual NSR 0.759. EWMA-14 reduces it to 0.371 and raises its
correlation with latent variance to 0.952. The lowest NSR is 0.371 for
EWMA-14; the closest alternatives, Kalman $q/r=0.03$ (0.373) and EWMA-10
(0.377), do not achieve the required improvement. EWMA-14 is therefore
retained. Figure~\ref{fig:smoothing-state} shows how the retained filter
changes the simulated state.

\begin{figure}[H]
\centering
\includegraphics[width=\textwidth]{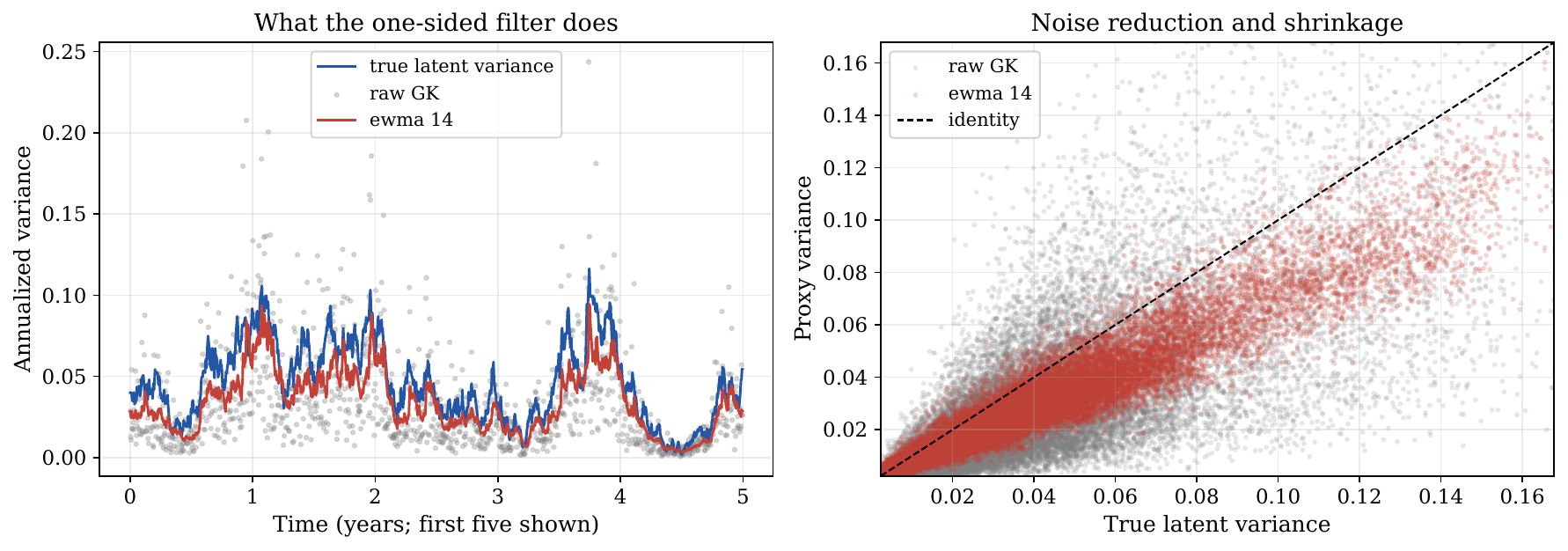}
\caption{Latent variance, raw GK, and the retained EWMA-14 proxy. The filter
is one-sided, and no time shift or future-looking smoother is used.}
\label{fig:smoothing-state}
\end{figure}

The trade-off is important. Smoothing lowers high-frequency proxy noise but
strongly attenuates the recovered leverage:
EWMA-14 gives $\widehat{\rho\xi}=-0.0280$ versus the true $-0.210$. Such
attenuation is consistent with the known difficulty of estimating leverage
from noisy or smoothed volatility proxies \cite{AitSahaliaFanLi2013}. Phase 3
therefore supports EWMA as a continuous, low-noise state construction. It
does not support calling the filter an EIV correction or a general solution
to proxy bias. Figure~\ref{fig:smoothing-tradeoff} places the noise reduction
and coefficient-recovery results together so that this limitation is visible.

\begin{figure}[H]
\centering
\includegraphics[width=\textwidth]{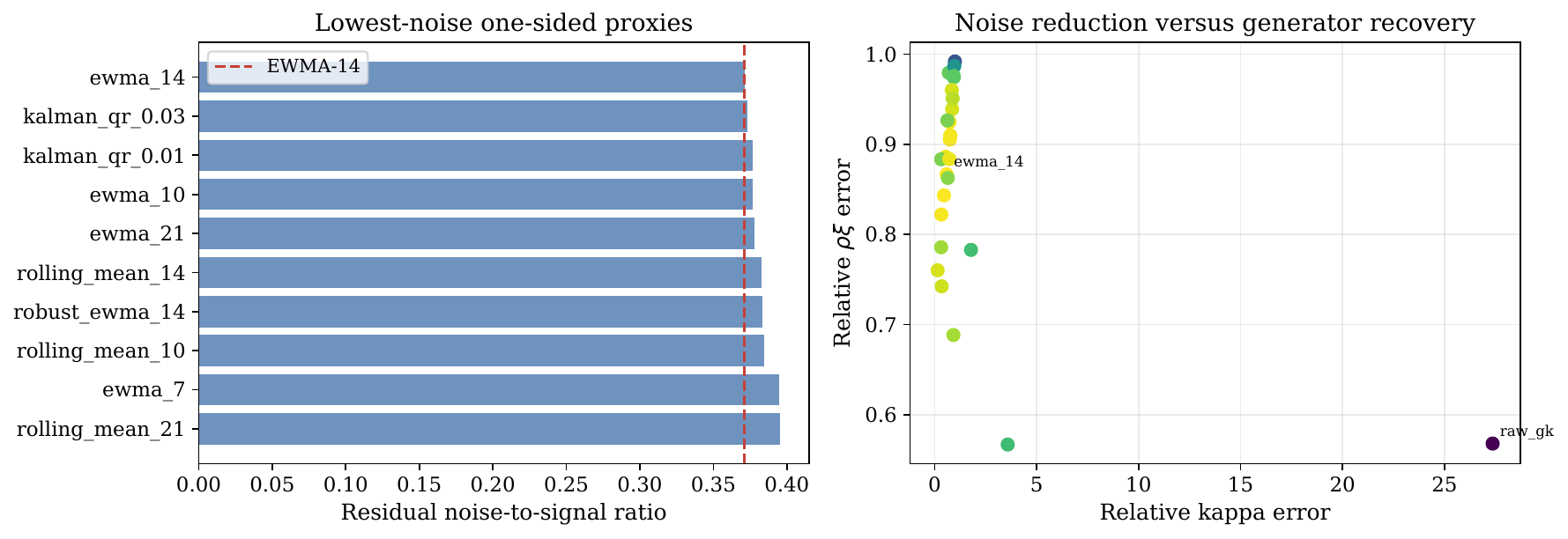}
\caption{One-sided proxy comparison showing the noise reduction and
parameter-recovery trade-off.}
\label{fig:smoothing-tradeoff}
\end{figure}
\FloatBarrier

\subsection{Phase 4: S\&P 500, 2007--2010}

We now apply the same state construction and estimator to an observed market
path. The goal is descriptive: we ask whether the fitted cross-diffusion is
negative during a period containing the financial crisis, not whether the
fit is a structural or option-pricing calibration. The empirical sample
contains 1,007 daily OHLC observations from 3 January
2007 through 30 December 2010. We construct same-day GK variance, apply
EWMA-14, and use the invariant unshifted LASSO pipeline.
Table~\ref{tab:sp500} reports the slope-derived estimates.

\begin{table}[H]
\centering
\caption{Contemporaneous S\&P 500 proxy-based estimates.}
\label{tab:sp500}
\small
\begin{tabular}{lrrrrrrr}
\toprule
Sample & $\widehat\kappa$ & $\widehat\theta$ & $\widehat\xi$ &
$\widehat a^{xx}_1$ & $\widehat{\rho\xi}$ &
$\widehat\rho_H$ & $\widehat\rho_{\mathrm{corr}}$\\
\midrule
2007--2010 & 2.361 & 0.0360 & 0.530 & 1.620 &
$-0.174$ & $-0.329$ & $-0.258$\\
\bottomrule
\end{tabular}
\end{table}

The long-run variance estimate corresponds to
$\sqrt{\widehat\theta}=18.96\%$ annualized physical/historical volatility. It
is not an implied-volatility estimate. The mean-reversion half-life
$\log(2)/\widehat\kappa$ is approximately 0.294 years, or 74 trading days.
The negative cross-diffusion is consistent with contemporaneous leverage, but
the absence of a latent variance ground truth prevents a claim that these are
the true Heston parameters. In particular,
$\widehat\rho_H=-0.329$ imposes $a^{xx}_1=1$, whereas the independently fitted
price-diffusion slope is 1.620; using that slope gives
$\widehat\rho_{\mathrm{corr}}=-0.258$. The fitted slopes also fail the Feller
condition:
\begin{equation}
\frac{2\widehat\kappa\widehat\theta}{\widehat\xi^2}=0.604<1.
\end{equation}
Moreover, the affine diffusion fits retain nonzero constants
\begin{equation}
(\widehat a^{vv}_0,\widehat a^{xx}_0,\widehat a^{xv}_0)
=(0.00121,-0.00197,0.00156).
\end{equation}
The slope-derived quantities should therefore
be read as summaries of an affine Heston-type proxy fit, not as an admissible
exact Heston calibration. Figure~\ref{fig:sp500} displays the three fitted
generator components, while Fig.~\ref{fig:sp500-state} shows the time-domain
proxy and open-to-open returns that enter those weak equations.

\begin{figure}[H]
\centering
\includegraphics[width=\textwidth]{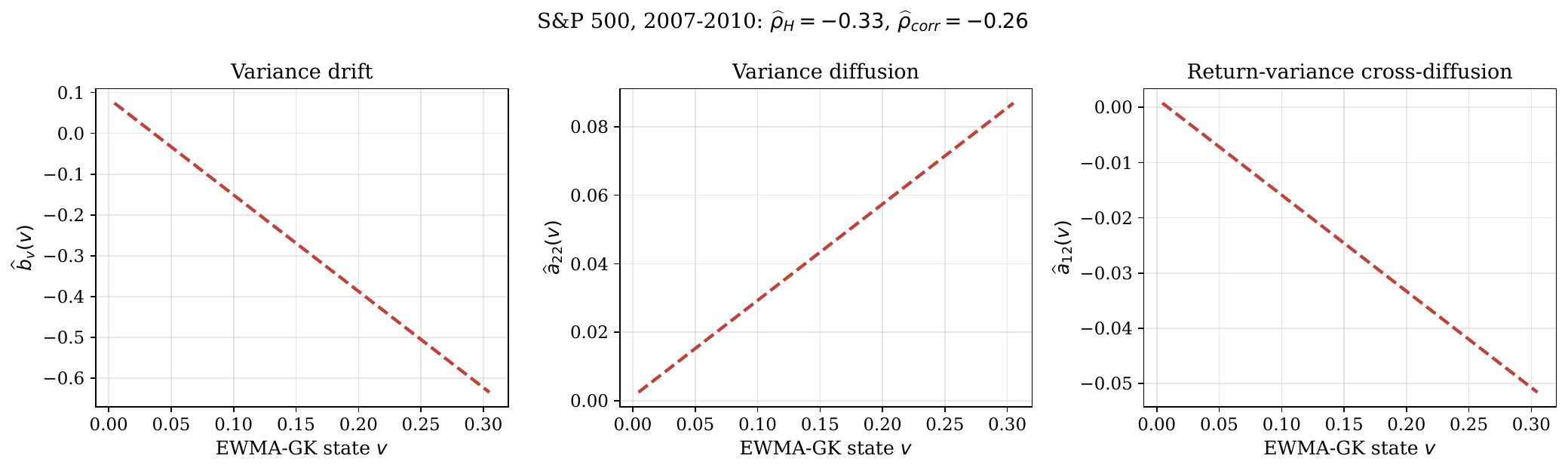}
\caption{Contemporaneous weak-form functions for the S\&P 500 sample.}
\label{fig:sp500}
\end{figure}

\begin{figure}[H]
\centering
\includegraphics[width=0.92\textwidth]{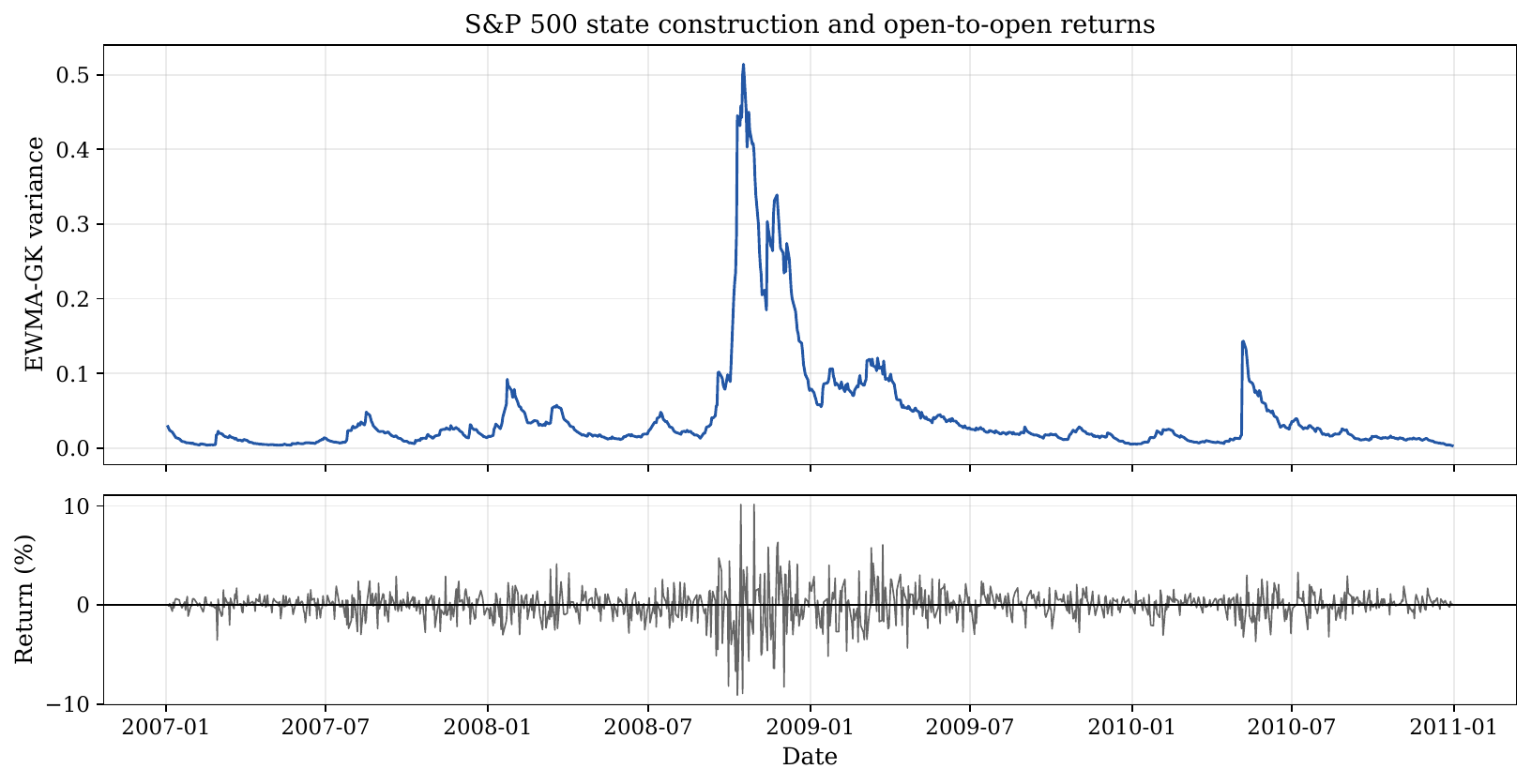}
\caption{The empirical state construction in time. The variance proxy rises
during the crisis while large daily returns cluster. This plot explains the
input to the generator fit; it is not a fitted-price or forecasting result.}
\label{fig:sp500-state}
\end{figure}
\FloatBarrier

\subsection{Phase 5: multi-seed leverage regimes}

One empirical estimate cannot show whether leverage recovery is systematic.
We therefore return to latent-state simulation and generate 30 paths at each
of four correlations. Every other generating and fitting choice is held
constant. Table~\ref{tab:rho} summarizes the recovered distributions;
Fig.~\ref{fig:rho} shows both their alignment with the identity line and the
absolute-error spread.

\begin{table}[H]
\centering
\caption{Leverage recovery across 30 seeds. Intervals are the 10th--90th
percentiles of $\widehat\rho$.}
\label{tab:rho}
\begin{tabular}{rrrrr}
\toprule
True $\rho$ & Median $\widehat\rho$ & 10th & 90th &
Median absolute error\\
\midrule
$-0.20$ & $-0.200$ & $-0.216$ & $-0.182$ & 0.0079\\
$-0.50$ & $-0.503$ & $-0.515$ & $-0.490$ & 0.0086\\
$-0.65$ & $-0.651$ & $-0.658$ & $-0.639$ & 0.0054\\
$-0.85$ & $-0.850$ & $-0.855$ & $-0.844$ & 0.0033\\
\bottomrule
\end{tabular}
\end{table}

All 120 latent-state fits recover the correct sign. The result shows that the
cross-variation target works across weak and strong negative leverage when
the model is matched and the state is observed. It does not remove the
empirical proxy problem.

\begin{figure}[H]
\centering
\begin{subfigure}{0.49\textwidth}
\includegraphics[width=\textwidth]{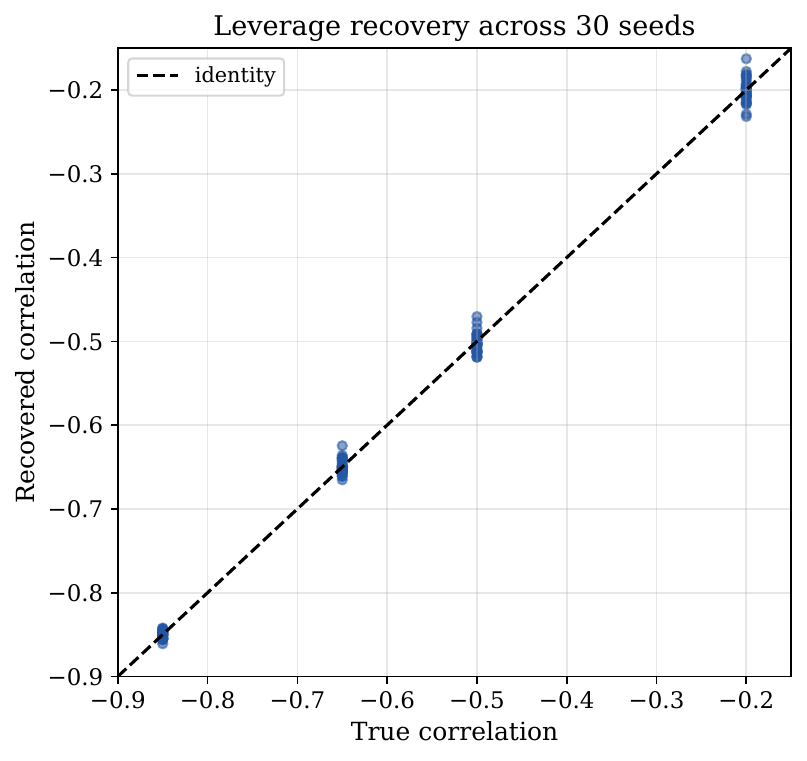}
\end{subfigure}
\begin{subfigure}{0.49\textwidth}
\includegraphics[width=\textwidth]{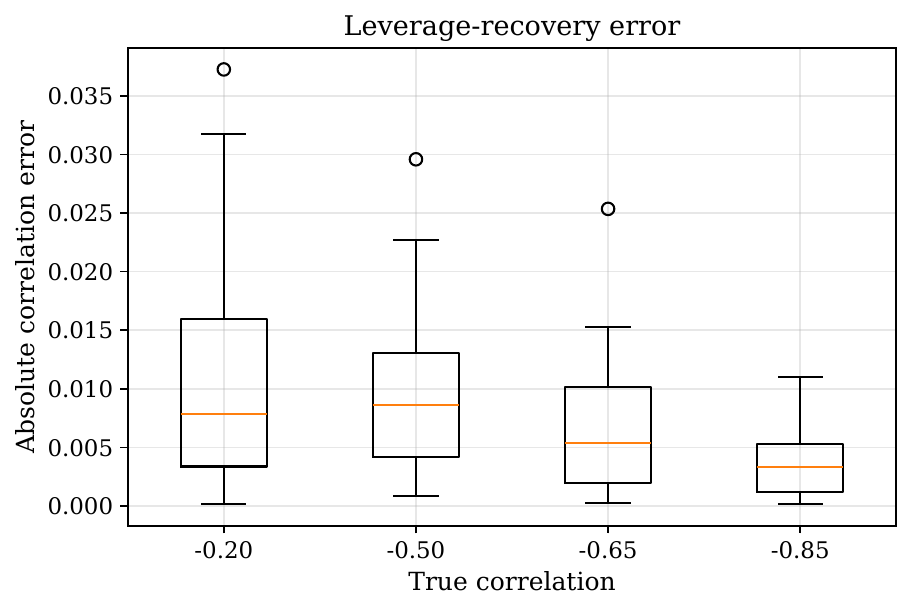}
\end{subfigure}
\caption{Multi-seed latent-state leverage recovery.}
\label{fig:rho}
\end{figure}
\FloatBarrier

\subsection{Phase 6: nonlinear-drift falsification}

Good recovery inside a linear Heston library does not show that LASSO can
discover a larger nonlinear model. We test that stronger claim directly by
adding a quadratic term to the simulated variance drift,
\begin{equation}
b_v(v)=1.5(0.04-v)-\beta(v-0.04)^2,
\end{equation}
and extend the LASSO library to $[1,v,v^2]$. A useful structure-discovery
method should rarely select $v^2$ when $\beta=0$. Here it is selected in
28 of 30 null simulations (93.3\%). At $\beta=15$, the median fitted
quadratic coefficient is $-12.18$, but the 10th--90th percentile interval
$[-23.96,0.93]$ still crosses zero. Selection rates of 93--97\% across all
tested values are therefore dominated by false positives. ``Selected'' here
means an absolute fitted coefficient greater than the code tolerance
$10^{-12}$; no post-LASSO significance test is applied.
The useful conclusion is negative: nonzero support is not reliable evidence
of a new drift term under this configuration. Figure~\ref{fig:nonlinear}
makes the failure visible in both selection frequency and fitted-coefficient
spread.

\begin{figure}[H]
\centering
\includegraphics[width=0.68\textwidth]{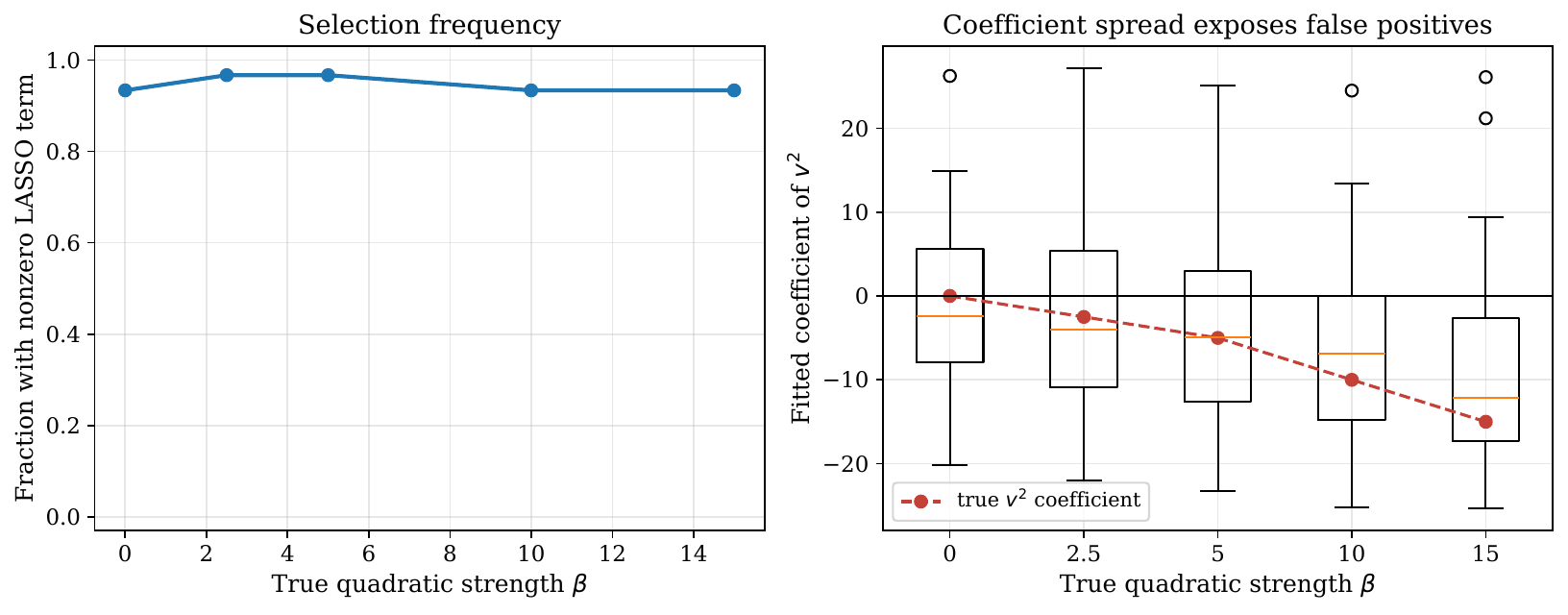}
\caption{Quadratic selection frequency. The high null selection rate
falsifies a reliable nonlinear-support-discovery claim for this configuration.}
\label{fig:nonlinear}
\end{figure}
\FloatBarrier

\subsection{Phase 7: Indian-equity proxy sensitivity}

The final phase asks a practical question that cannot be answered with one
variance proxy: does the sign of recovered leverage remain negative when the
same companies are analyzed using different state constructions? The stress
test uses a balanced panel of 50 Indian equities from
1 February through 30 June 2020 and five contemporaneous variance proxies:
one-minute realized variance, raw GK, EWMA-GK, Parkinson, and Yang--Zhang
\cite{AndersenEtAl2003,Parkinson1980,YangZhang2000}. Each company--proxy pair
is processed by the same LASSO pipeline. The panel has 100 retained trading
days per company (5,000 company-days); every retained company-day contains
316--376 valid one-minute rows. The supplied one-minute Parquet archive is
aggregated upstream to the daily proxy panel, and the standardized recovery
script starts from that panel.

\begin{table}[H]
\centering
\caption{Proxy sensitivity of the recovered leverage sign across 50
companies. Parameter magnitudes are not treated as structural estimates in
this short, noisy panel.}
\label{tab:india}
\begin{tabular}{lrrr}
\toprule
Variance proxy & Valid $N$ & Median $\widehat\rho_{\mathrm{corr}}$ &
Fraction $\widehat\rho_{\mathrm{corr}}<0$\\
\midrule
One-minute realized variance & 50 & $-0.320$ & 92\%\\
Raw Garman--Klass & 49 & $-0.424$ & 90\%\\
Parkinson & 50 & $-0.229$ & 80\%\\
EWMA Garman--Klass & 41 & $-0.204$ & 58\%\\
Yang--Zhang & 50 & $0.000$ & 38\%\\
\bottomrule
\end{tabular}
\end{table}

The comparison gives a mixed but interpretable result. Negative leverage is
common for the realized-variance, GK, and Parkinson proxies, but it is not
universal. Smoothing weakens the sign rate, and
Yang--Zhang does not support a panel-wide negative result. The appropriate
conclusion is therefore \emph{proxy-sensitive sign evidence}, not
proxy-robust identification. Drift and diffusion magnitudes are unstable in
this panel---several fits imply nonpositive mean reversion---so they are not
given a structural interpretation. ``Valid $N$'' counts finite
covariance-normalized correlations; the sign-rate denominator remains all 50
companies, so an undefined correlation is not counted as negative.
Table~\ref{tab:india} supplies the proxy-by-proxy numbers.
Figure~\ref{fig:india-generator} retains the failed realized-variance
generator as a diagnostic rather than hiding it.

\begin{figure}[H]
\centering
\includegraphics[width=0.92\textwidth]{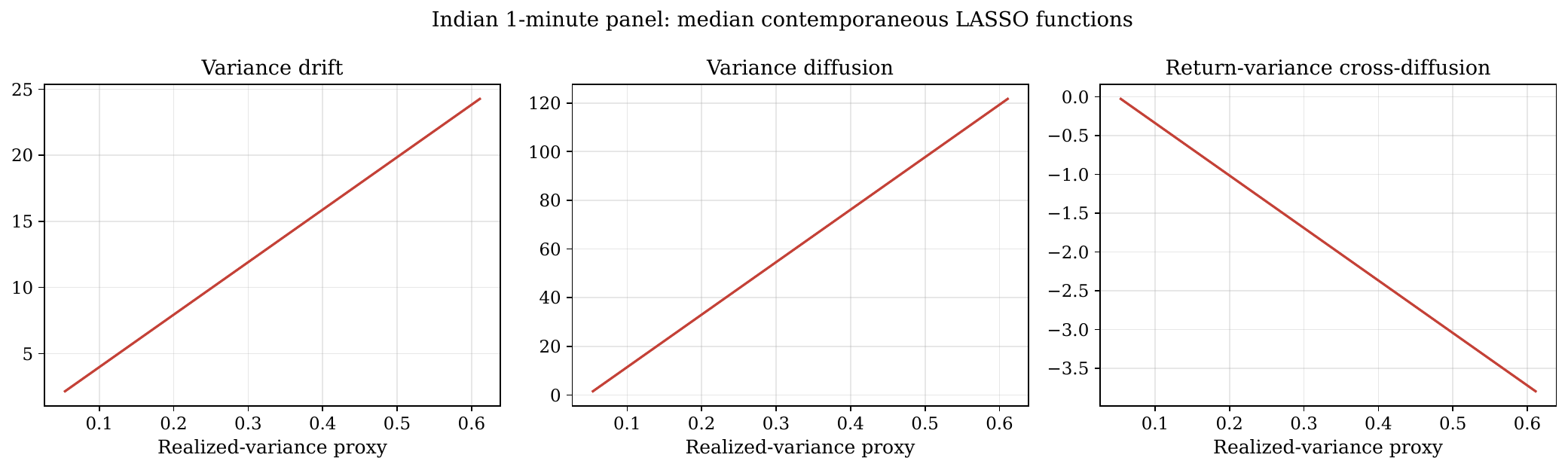}
\caption{Median fitted drift, variance-diffusion, and cross-diffusion
functions for the one-minute realized-variance proxy. The non-mean-reverting
median drift is displayed as a failure diagnostic and is not interpreted as
a valid Heston generator.}
\label{fig:india-generator}
\end{figure}

Figures~\ref{fig:india} and \ref{fig:india-sign} separate the distribution of
finite covariance-normalized correlations from the simpler all-company sign
rates. This distinction matters because a zero or undefined LASSO slope is
not evidence of a negative relationship.

\begin{figure}[H]
\centering
\includegraphics[width=0.60\textwidth]{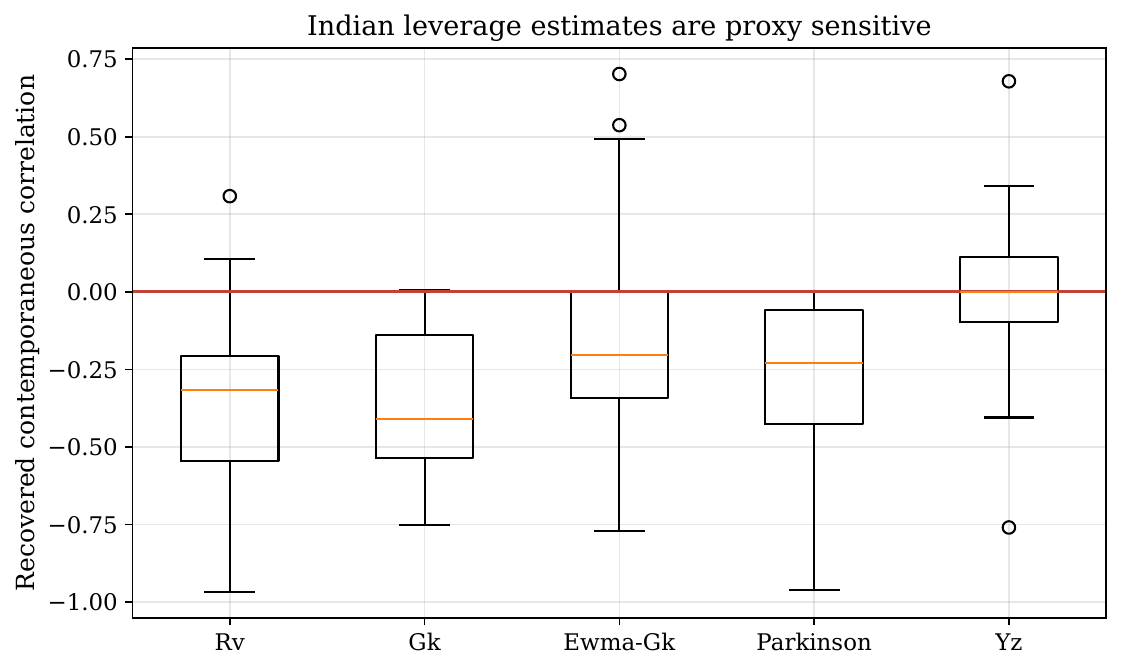}
\caption{Distribution of admissible contemporaneous correlation estimates by
variance proxy.}
\label{fig:india}
\end{figure}

\begin{figure}[H]
\centering
\includegraphics[width=0.72\textwidth]{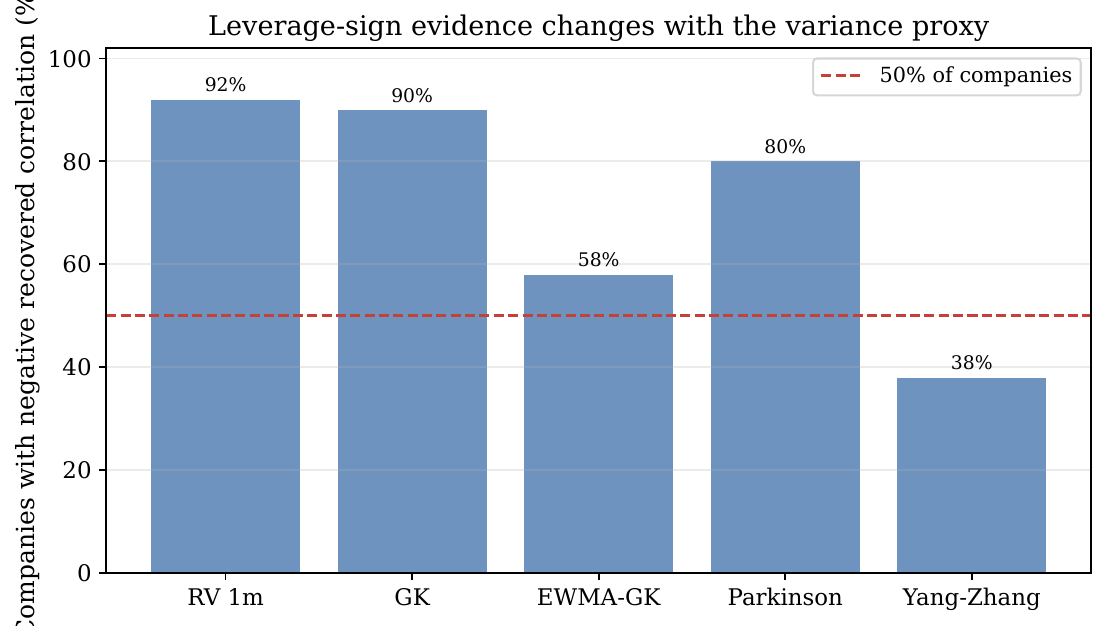}
\caption{The proxy-dependent conclusion stated directly as a sign rate. Raw
realized-variance and range estimators support negative leverage more often
than the smoothed or Yang--Zhang proxies.}
\label{fig:india-sign}
\end{figure}
\FloatBarrier

\section{Discussion}

The experiments are designed to separate three cases: recovery with an
observed state, recovery after controlled state corruption, and recovery from
market proxies. The evidence is strong in the first case and progressively
more conditional in the other two. This distinction is the main guide for
interpreting the results.

\subsection{What the experiments support}

With an observed latent state and a prespecified Heston library, weak-form
LASSO accurately recovers variance diffusion, cross-diffusion, and the
leverage correlation. In the 30 exact fine-step Phase 1 paths, all estimates
of $\xi$, $\rho$, and $\rho\xi$ remain below 5\% error. Drift is less stable,
with median errors of 12.29\% for $\kappa$ and 6.61\% for $\theta$. The
separate leverage-regime experiment recovers the correct sign in 120
additional latent-state simulations. The same machinery also reproduces the reported
contemporaneous S\&P leverage ratio $\widehat\rho_H=-0.329$ under an explicit
Heston normalization; the corresponding fitted covariance correlation is
$\widehat\rho_{\mathrm{corr}}=-0.258$.

The results do not justify stronger claims. Additive state noise gradually
degrades leverage recovery; no sharp threshold is established. One-sided
smoothing creates a continuous state with lower proxy noise but can attenuate
cross-variation substantially. The empirical leverage sign changes with the
proxy. Finally, the nonlinear null experiment shows that nonzero LASSO support
alone is insufficient evidence of a newly discovered drift term.

\subsection{Limitations}

The limitations below are not separate from the results; they explain why the
synthetic and empirical claims are stated differently.

\begin{itemize}[leftmargin=*]
  \item \textbf{Contemporaneous state construction.} Same-day OHLC uses
  information accumulated during the day. The empirical estimator is
  descriptive and cannot be interpreted as an ex ante trading signal.
  \item \textbf{Generated regressor.} The variance proxy is noisy and
  smoothed. Standard cross-validation does not correct errors in variables,
  and coefficient uncertainty from state construction is not propagated.
  \item \textbf{Overlapping weak equations.} The five cross-validation folds
  split 50 kernel equations whose underlying trajectory contributions
  overlap. The folds are therefore tuning folds, not independent
  out-of-sample observations; this is especially relevant to the nonlinear
  false-selection result.
  \item \textbf{One market path.} The S\&P sample contains a crisis and only
  1,007 daily observations. There is no latent-state truth against which to
  validate its parameter estimates.
  \item \textbf{Return timing.} The S\&P open-to-open return contains an
  overnight component, whereas same-day Garman--Klass variance is based on
  the intraday high, low, open, and close. The retained contemporaneous
  convention does not eliminate this measurement mismatch.
  \item \textbf{Indian sample selection.} The balanced 50-company panel was
  selected using coverage, liquidity, and sector information from the same
  stress window. It is a deliberately balanced stress sample; it is neither
  capitalization weighted nor demonstrably free of survivorship bias.
  \item \textbf{Restricted library.} The main result estimates coefficients
  inside the Heston form; it does not discover that form without prior
  specification.
  \item \textbf{Simulation discretization.} Phase 1 uses a fine internal step
  before daily observation for all 30 paths, whereas the separate leverage
  regime screen uses direct daily full-truncation Euler for computational
  replication. The estimator is invariant, but coefficient accuracy should
  not be compared across these observation designs without accounting for
  discretization.
  \item \textbf{Uncertainty.} The reported percentile bands describe
  simulation or injected-noise repetitions, not formal confidence intervals
  for the empirical generator.
\end{itemize}

\subsection{Next methodological steps}

The most important next step is to treat the variance state and the generator
as a joint estimation problem. Possible approaches include latent-state
filtering, instrumental-variable weak moments, and uncertainty propagation
through the proxy construction. The selection problem also needs a stronger
validation design, such as separate tuning and evaluation paths or stability
selection calibrated under a null model. For higher-frequency data,
microstructure noise and asynchronous sampling should be modeled explicitly
rather than left to a smoother.

\section{Conclusion}

The central result is specific rather than uniform. Across 30 synchronized
fine-step, daily-observed paths, the standardized weak-form LASSO pipeline
keeps $\xi$, $\rho$, and $\rho\xi$ below 5\% error in every seed. It does not
provide the same guarantee for drift: median errors are 12.29\% for $\kappa$
and 6.61\% for $\theta$. The separate leverage-regime experiment recovers the
correct leverage sign in all 120 paths.

The empirical conclusion is narrower. The contemporaneous S\&P 500 fit gives
$\widehat\rho_H=-0.329$ and
$\widehat\rho_{\mathrm{corr}}=-0.258$, but the Indian panel shows that sign
frequency depends materially on the variance proxy. The replicated noise
ablation explains part of this sensitivity through a gradual error profile,
and the smoothing comparison shows that lower proxy noise can come with
strong leverage attenuation. The nonlinear falsification further limits the
claim to coefficient recovery in a prespecified Heston library. In short, the
method identifies leverage well under controlled conditions and provides
proxy-sensitive contemporaneous evidence in market data; it does not solve
latent-state estimation or general stochastic-structure discovery.

\section*{Data and code availability}

Code and supplementary materials for this work are available at \url{https://github.com/Sathvik-Gullipalli/Leverage-recovery-Weak-SINDy}. The repository will be updated with the final reproducibility package upon publication.

\section*{Indian panel composition}

The balanced panel contains five selected names from each of ten broad
sectors. This table restores the sample definition from the previous draft;
it does not imply that sector balance makes the 50 firms representative of
the full Indian equity market.

\begin{table}[H]
\centering
\scriptsize
\begin{tabular}{p{0.31\textwidth}p{0.63\textwidth}}
\toprule
Sector & Symbols\\
\midrule
Energy and utilities & RELIANCE, POWERGRID, ONGC, NTPC, COALINDIA\\
Non-bank financials and insurance & BAJFINANCE, BAJAJFINSV, HDFCLIFE, CHOLAFIN, SBILIFE\\
Banks & HDFCBANK, ICICIBANK, AXISBANK, SBIN, KOTAKBANK\\
Consumer discretionary, services, telecom & BHARTIARTL, ASIANPAINT, TITAN, INDIGO, ADANIPORTS\\
Automobiles & MARUTI, TMPV, M\&M, BAJAJ-AUTO, EICHERMOT\\
Consumer staples & HINDUNILVR, ITC, NESTLEIND, BRITANNIA, DABUR\\
Information technology & TCS, INFY, HCLTECH, TECHM, WIPRO\\
Industrials, capital goods, construction & LT, BEL, SIEMENS, ABB, HAL\\
Healthcare & SUNPHARMA, CIPLA, DRREDDY, DIVISLAB, APOLLOHOSP\\
Materials, metals, cement & TATASTEEL, ULTRACEMCO, JSWSTEEL, HINDALCO, GRASIM\\
\bottomrule
\end{tabular}
\end{table}

\section*{Acknowledgments}

This research was conducted at the Quantum and Nano Devices (QuaNaD) Lab, PES University. The authors would like would like to acknowledge the support of the QuaNaD Lab and thank Prof. Shreyus, Mayank, Omkar, and Tabassum for being part of the research environment during my internship.

\end{document}